\newcommand{\macro}[2]{ \providecommand{#1}{{\ensuremath{#2}}\xspace}}
\macro{\algo}{\textsc{Algo}}
\macro{\A}{\mathcal{A}}
\macro{\cP}{\mathcal{P}}
\macro{\area}{\mathcal{A}}
\macro{\bigdot}{\bullet}
\macro{\boul}{B_{\triangle}}
\macro{\by}{\mathcal{Y}}
\macro{\CCC}{\mathcal{CIR}}
\macro{\Cn}{\mathcal{C}_{\nC}}
\macro{\cate}{\mathcal{CAT}}
\macro{\cequi}{\mathcal E}
\macro{\comp}{\texttt{t}}
\newcommand{\dd}[1]{\overline{#1}}
\macro{\dup}{dup}
\newcommand{\dest}[3]{\textsc{dest}_{#1}(#2,\lambda(#3))}
\newcommand{\destjc}[3]{\textsc{dest}_{#1}(#2,#3)}
\macro{\equi}{\mathcal{VERT}}
\macro{\eqcl}{\mathcal{C}}
\macro{\equiB}{\,\equiv_B\,}
\macro{\Gu}{\hat{G}}
\macro{\Ku}{\hat{K}}
\macro{\qcc}{\Phi}
\macro{\NT}{\mathcal{NT}}
\macro{\Neig}{N^{\triangle}}
\macro{\FNT}{\mathcal{FC}}
\macro{\F}{\mathcal{F}}
\macro{\gfam}{\mathcal F}
\macro{\G}{\mathcal{G}}
\macro{\Gport}{\mathcal{G}^{\delta}}
\macro{\Gu}{\hat{G}}
\macro{\INF}{\mathcal{IC}}
\macro{\IntCond}{\textsc{int}}
\macro{\labels}{labels}
\macro{\lab}{\lambda}
\macro{\listC}{ListComp}
 \macro{\mapCCC}{\mathcal{M_{\CCC}}}
\macro{\mapG}{\mathcal{M}}
\macro{\namemaps}{\labels}
\macro{\N}{\mathbb{N}}
\macro{\nC}{n}
\macro{\Next}{next}
\macro{\nivsup}{\mathcal{HOR}}
\macro{\nspcc}{\mathcal{I}_{ST}}
\macro{\nstV}{\mathcal{N}_{st}}
\macro{\nbfs}{\mathcal{N}_{bfs}}
\macro{\nCT}{\#CompTrav}
\macro{\no}{\textsc{no}}
\macro{\ord}{\lessdot}
\newcommand{\parag}[1]{\paragraph{\textbf{#1}.}}
\macro{\parc}{\mathcal P}
\macro{\p}{2}
\macro{\pn}{\mathcal{P}}
\newcommand{\port}[2]{\delta_{#1}(#2)}
\macro{\pred}{pred}
\macro{\rayon}{radius}
\macro{\rec}{\mathcal R}
\macro{\ringTF}{\mathcal{R}^{\geq 4}}
\macro{\SC}{\mathcal{SC}}
\macro{\Stree}{\mathcal{ST}}
\macro{\spac}{\mathbb{S}}
\macro{\ske}{\mathcal{G}^1}
\macro{\Spe}{\mathcal{S}}
\newcommand{\St}{St}
\macro{\TF}{triangle-free}
\newcommand{\tild}[1]{\widetilde{#1}}
\macro{\tu}{\tild{u}}
\macro{\tv}{\tild{v}}
\macro{\tw}{\tild{w}}
\newcommand{\tp}{\tild{p}}
\newcommand{\tq}{\tild{q}}
\newcommand{\tc}{\tild{c}}
\macro{\Topo}{\mathbb{T}}
\macro{\tree}{\mathcal{T}}
\macro{\treeC}{\mathcal{T}_{\mathcal{C}}}
\macro{\TCond}{\textsc{tri}}
\newcommand{\Un}[1]{\widehat{#1}}
\macro{\view}{\mathcal{T}}
\macro{\haut}{\shortuparrow}
\macro{\Xu}{\hat{\X}}
\macro{\X}{\mathcal{K}}
\macro{\yes}{\textsc{yes}}
\theoremstyle{plain}
\newcounter{cthm}
\newtheorem{thm}[cthm]{Theorem}
\newtheorem{lem}{Lemma}[section]
\newtheorem{prop}[lem]{Proposition}
\newtheorem{cor}[lem]{Corollary}
\newtheorem{mydef}[lem]{Definition}
\institute{LIF, Université Aix-Marseille and CNRS, FRANCE}
\title{
Anonymous Graph Exploration with Binoculars
}
\author{J\'er\'emie Chalopin, Emmanuel Godard and Antoine Naudin
}
\begin{document}

\maketitle
\pagestyle{plain}
\setcounter{page}{1}
\begin{abstract}

We investigate the exploration of networks by a mobile agent.  It is
long known that, without global information about the graph,
it is not possible to make the agent halts after the exploration
except if the graph is a tree.  We therefore endow the agent
with \emph{binoculars}, a sensing device that can show the local
structure of the environment at a constant distance of the agent
current location.

We show that, with binoculars,
it is possible to explore and halt in a large class of non-tree
networks. 
We give a complete characterization of
the class of networks that can be explored using binoculars
using standard notions of discrete topology.
Our characterization is constructive,
we present an Exploration algorithm that is universal; this algorithm
explores any network explorable with binoculars, and never halts in
non-explorable networks.
\end{abstract}

\begin{keywords}
Mobile Agent, Graph Exploration, Anonymous Graphs, Universal Cover,
Simple connectivity
\end{keywords}

\section{Introduction}
\makeatletter{}

Mobile agents are computational units that can progress autonomously
from place to place within an environment, interacting with the
environment at each node that it is located on.  Such software robots
(sometimes called bots, or agents) are already prevalent in the
Internet, and are used for performing a variety of tasks such as
collecting information or negotiating a business deal.
More generally, when the data is physically
dispersed, it can be sometimes beneficial to move the computation to
the data, instead of moving all the data to the entity performing the
computation. The paradigm of mobile agent computing / distributed
robotics is based on this idea.  
As underlined in \cite{Das_beatcs}, the use of mobile agents 
has been advocated for numerous reasons
such as robustness against network disruptions, improving
the latency and reducing network load, providing more
autonomy and reducing the design complexity, 
and so on (see e.g. \cite{7mobile}).

For many distributed problems with mobile agents, exploring, that is
visiting every location of the whole environment, is an important
prerequisite. In its thorough exposition about Exploration by mobile
agents \cite{Das_beatcs}, S. Das presents numerous variations of the
problem. 
In particular, it can be noted that,
given some global information about the environment
(like its size or a bound on the diameter), it is always possible to
explore, even in environments where there is no
local information that enables to know, arriving on a node, whether it
has already been visited (e.g. anonymous networks).  If no global
information is given to the agent, then the only way to perform a
network traversal is to use a \emph{unlimited} traversal
(e.g. with a classical BFS or Universal Exploration Sequences
\cite{AKLLR79,uxs,R08} with increasing parameters).
This infinite process is sometimes called
\emph{Perpetual Exploration} when the agent visits
infinitely many times every node.  Perpetual Exploration has
application mainly to security and safety when the mobile agents are a
way to regularly check that the environment is safe.  But it is
important to note that in the case where no global information is
available, it is impossible to always detect when the Exploration has
been completed. This is problematic when one would like to use the
Exploration algorithm composed with another distributed algorithm.

In this note, we focus on Exploration with termination. It is known
that in general anonymous networks, the only topology that enables to
stop after the exploration is the tree-topology.  From standard
covering and lifting techniques, it is possible to see that exploring
with termination a (small) cycle would lead to halt before a complete
exploration in huge cycles.
Would it be possible to explore, with full stop, non-tree topologies
without global information? We show here that it is possible to
explore a larger set of topologies while only providing the agent with
some local information.

The information that is provided can be informally described as giving
\emph{binoculars} to the agent. This constant range sensor enables the
agent to see the relationship between its neighbours. 
Using binoculars is a quite natural enhancement
for mobile robots.
In some sense,
we are trading some a priori global information (that might be difficult to
maintain efficiently) for some local information that
the agent can \emph{autonomously} and \emph{dynamically} acquire.
We give here a complete characterization of
which networks can be explored with binoculars.

\section{Exploration with Binoculars}
\subsection{The Model}

\parag{Mobile Agents}
We use a standard model of mobile agents, that we now formally describe.
A mobile agent is a computational unit evolving in an undirected 
simple graph $G=(V,E)$ from vertex to vertex along the edges. 
A vertex can have some labels attached to it.  
There is no global guarantee on the labels, in particular
vertices have no identity (anonymous/homonymous setting),
 i.e., local labels are not guaranteed to be unique.  
The vertices are endowed with a port numbering function available to
the agent in order to let it navigate within the graph. 
Let $v$ be a vertex, we denote by
$\delta_v:V\to \N$,  the injective port numbering function giving a
locally unique identifier to the different adjacent nodes of $v$.
We denote by $\port{v}{w}$ the port number of $v$ leading to the vertex $w$,
i.e., corresponding to the edge $vw$ at $v$.
We denote by $(G,\delta)$ the graph $G$ endowed with a port numbering $\delta=\{\delta_v\}_{v\in  V(G)}$.

When exploring a network, we would like to achieve it for any port
numbering.
So we consider the set of every graph endowed with a valid port
numbering function, called $\Gport$. 
By abuse of notation, since the port numbering is usually fixed, we
denote by $G$ a graph $(G,\delta)\in\Gport$.

The behaviour of an agent is cyclic: it obtains local information
(local label and port numbers), computes some values, and moves to its
next location according to its previous computation.  We also assume
that the agent can backtrack, that is the agent knows via which port
number it accessed its current location. 
We do not assume that the starting point of the agent (that is called
the \emph{homebase}) is marked.  All nodes are a priori
indistinguishable except from the degree and the label.  We assume
that the mobile agent is a Turing machine (with unbounded local
memory).  Moreover we assume that an agent accesses its memory and
computes instructions instantaneously.  An execution $\rho$ of an
algorithm $\A$ for a mobile agent is composed by a (possibly infinite)
sequence of moves by the agent.  The length $|\rho|$ of an execution
$\rho$ is the total number of moves.

\parag{Binoculars}
Our agent can use ``binoculars'' of range 1, 
that is, it can ``see'' the graph (with port
numbers) that is induced by the
adjacent nodes of its current location. 
In order to reuse standard techniques and algorithms,
we will actually only assume that
the nodes of the graph we are exploring are labelled by this induced
balls, that is, computationally, the mobile agent has only access to
the label of its current location.

So the difference with the standard model is only in the structure of
the labels. That is the labels are
specialized and encode some local information. It is straightforward
to see that this encoding in labels is equivalent to the model with
binoculars (the ``binoculars'' 
primitive give only access to more information, it does not enable
more moves).
See Section~\ref{sect:def} for a formal definition.

\subsection{The Exploration Problem}

We consider the Exploration Problem with Binoculars for a mobile agent.
An algorithm $\A$ is an Exploration algorithm if for
any graph $G=(V,E)$ with binocular labelling, for any port numbering $\delta_G$,
starting from any arbitrary vertex $v_0\in V$, 
\begin{compactitem}
\item either the agent visits every vertex at least once and terminates;
\item either the agent never halts.
\footnote{a seemingly stronger definition could require that the agent performs
perpetual exploration in this case. It is easy to see that this is
actually equivalent for computability considerations 
since it is always possible to
compose in parallel (see below) a perpetual BFS to any never halting
algorithm.}
\end{compactitem}
The intuition for the second requirement is to model the absence of
global knowledge while maintaining safety of composition.  Since we
have no access to global information, we might not be able to visit
every node on some networks, but, in this case, we do not allow the
algorithm to appear as correct by terminating. This allows to safely
compose an Exploration algorithm with another algorithm without
additional global information.

We say that a graph $G$ is \emph{explorable} if there exists an Exploration
algorithm that halts on $G$ starting from any point.
An algorithm \A explores \gfam if it is an Exploration algorithm such
that for all $G\in\gfam$, \A explores and halts. (Note that since \A 
is an Exploration algorithm, for any $G\notin\gfam$, \A  either never
halts, or \A explores $G$.)

In the context of distributed computability, a very natural question is
to characterize the maximal sets of explorable networks.
It is not immediate that there is a maximum set of explorable networks.
Indeed, it could be possible that two graphs are explorable, but not
explorable with the same algorithm.
However, we note that explorability is monotone. That is
if $\gfam_1$ and $\gfam_2$ are both explorable then  
$\gfam_1\cup\gfam_2$ is also explorable.
Consider $\A_1$ that explores $\gfam_1$ and 
$\A_2$ that  explores $\gfam_2$ then the parallel composition of
both algorithms
(the agent performs one step of $\A_1$ then backtracks to perform
one step of $\A_2$ then backtracks, etc...; 
and when one of $\A_1$ or $\A_2$
terminates, the composed algorithm terminates) 
explores $\gfam_1\cup\gfam_2$ since 
these two algorithms guarantee to have always explored the full graph
when they terminate on any network. 
So there is actually a maximum set of explorable graphs. 

\subsection{Our Results}
In our results, we are mostly interested in computability aspects,
that is Exploration algorithms we consider could (and will reveal to)
have an unboundable complexity.  We first give a necessary condition
for a graph to be explorable with binoculars using the standard
lifting technique. Using the same technique, we give a lower bound on
the move complexity\footnote{The complexity measure we are interested
in here is the number of edge traversals (or moves) performed by the
agent during the execution of the algorithm} to explore a given
explorable graph.  Then we show that the Exploration problem admits a
universal algorithm, that is, there exists an algorithm that halts
after visiting all vertices on all explorable graphs.  This algorithm,
together with the necessary condition, proves that the explorable
graphs are exactly the graphs whose clique complexes admit a finite
universal cover (these are standard notions of discrete topology, see
Section~\ref{sect:def}). This class is larger than the class of tree
networks that are explorable without binoculars. It contains graphs
whose clique complex is simply connected (like chordal graphs or
planar triangulations), but also triangulations of the projective
plane.  Finally, we show that the move complexity of any universal
exploration algorithm cannot be upper bounded by any computable
function of the size of the network.

\parag{Related works}
To the best of our knowledge, using binoculars has never been
considered for mobile agent on graphs. When the agent can only see the
label and the degree of its current location, it is well-known that
any Exploration algorithm can only halts on trees and a standard DFS
algorithm enables to explore any tree in $O(n)$ moves. Gasieniec et
al.~ \cite{explotree} presented an algorithm that can explore any tree
with a memory of size $O(\log n)$. For general anonymous graphs,
Exploration with halt has mostly been investigated assuming at least
some global bounds, in the goal of optimizing the move complexity. It
can be done in $O(\Delta^n)$ moves using a DFS traversal while knowing
the size $n$ when the maximum degree is $\Delta$.  This can be reduced
to $O(n^3 \Delta^2 \log n)$ using Universal Exploration
Sequences \cite{AKLLR79,uxs} that are sequences of port numbers that
an agent can sequentially follow and be assured to visit any vertex of
any graph of size at most $n$ and maximum degree at most $\Delta$.
Reingold~\cite{R08} showed that universal exploration sequences can be
constructed in logarithmic space.

Trading global knowledge for structural local information by designing
specific port numberings, or specific node labels that enable easy or
fast exploration of anonymous graphs have been proposed in
\cite{CFIKP05,Gasieniec_Radzik_2008,Ilcinkas_2008}.
Note that using binoculars is a local information that can be locally
maintained contrary to the schemes proposed by these papers where the
local labels are dependent of the full graph structure.

See also \cite{Das_beatcs} for a detailed discussion about Exploration using
other mobile agent models (with pebbles for examples).

\section{Definitions and Notations} \label{sect:def}
\makeatletter{}
\subsection{Graphs} 
We always assume simple and connected graphs.  The following
definitions are standard \cite{Ros00}. Let $G$ be a graph, we denote
$V(G)$ (resp. $E(G)$) the set of vertices (resp. edges). If two
vertices $u,v\in V(G)$ are adjacent in $G$, the edge between $u$ and
$v$ is denoted by $uv$.

\parag{Paths and Cycles} A path $p$ in a graph $G$ is a sequence of
vertices $(v_0,...,v_k)$ such that $v_iv_{i+1}\in E(G)$ for every
$0\leq i < k$.  We say that the length of a path $p$, denoted by
$|p|$, is the number of edges composing it.  We denote by $p^{-1}$ the
inverted sequence of $p$.  A path is simple if for any $i\neq j$,
$v_i\neq v_j$.  A cycle is a path such that $v_0=v_k$, $k\in\N$.  A
cycle is simple if it is the empty path or the path
$(v_0,\dots,v_{k-1})$ is simple.  A \emph{loop} of length $k$ is a
sequence of vertices $(v_0,...,v_k)$ such that $v_0=v_k$ and
$v_i=v_{i+1}$ or $v_iv_{i+1}\in E(G)$, $\forall 0\leq i< k$; the
length of a loop is denoted by $|c|$.  On a graph endowed with a port
numbering, a path $p=(v_0,...,v_k)$ is labelled by
$\lambda(p)=(\delta_{v_0}(v_{1}),\delta_{v_1}(v_2),...,\delta_{v_{k-1}}(v_k))$.

The distance between two vertices $v$ and $v'$ in a graph $G$ is  
denoted by $d_G(v,v')$.
It is the length of the shortest path between $v$
and $v'$ in $G$.
 
Let $N_G(v,k)$ be the set of vertices at distance 
at most $k$ from $v$ in $G$. We denote by $N_G(v)$, the vertices at
distance at most $1$ from $v$.
We define $B_G(v,k)$ to be the subgraph of $G$ 
induced by the set of vertices $N_G(v,k)$.
In the following, we always assume that every vertex $v$ of $G$ has a
label $\nu(v)$ corresponding to the binoculars labelling of $v$.
This binoculars label corresponds to a graph $(V(\nu(v)),E(\nu(v))$
with port numbering $\tau$, that is 
isomorphic to the graph induced by $N_G(v,1)$ (with its port numbering). 
Formally, $V(\nu(v)) = N_G(v)$, $E(\nu(v))= \{ww' \mid ww'\in E(G)\}$ and
for any $ww'\in E(\nu(v))$, $\tau_w(w')=\delta_w(w')$.

\parag{Coverings}
We now present the formal definition of graph homomorphisms that capture the relation between graphs that locally look the same in our model.
A map $\varphi: V(G) \rightarrow V(H)$ from a graph $G$ to a graph $H$
is a \emph{homomorphism} from $G$ to $H$ if for every edge $uv \in
E(G)$, $\varphi(u)\varphi(v) \in E(H)$.
  A homomorphism $\varphi$ from $G$ to $H$ is a
  \emph{covering} if for every $v \in V(G)$, $\varphi_{\mid N_G(v)}$ is
  a bijection between $N_G(v)$ and $N_H(\varphi(v))$.

This standard definition is extended to labelled graphs
$(G,\delta,label)$ and $(G',$ $\delta',label')$ by adding the conditions
that $label'(\varphi(u))=label(u)$ for every $u\in V(G)$ and that
$\delta_u(v) = \delta'_{\varphi(u)}(\varphi(v))$ for every edge $uv \in
E(G)$. We have the following equivalent definition when $G$ and $G'$
are endowed with a port numbering.

\begin{prop}
\label{covering}
  Let $(G,\delta,label)$ and $(G',\delta',label')$ 
  be two labelled graphs, an homomorphism $\varphi:G\longrightarrow G'$
  is a \emph{covering} if and only if
  \begin{compactitem}
    \item for all $u\in V(G)$, $label(u)=label'(\varphi(u))$,
    \item for all $u\in V(G)$, $u$ and $\varphi(u)$ have same degree.
    \item for any $u\in V(G)$, for any $v\in
      N_G(u)$, $\delta_u(v) = \delta'_{\varphi(u)}(\varphi(v))$.
  \end{compactitem}
\end{prop}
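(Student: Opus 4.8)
The plan is to prove the two implications, starting from the observation that two of the three listed conditions are not really new: the extension of the covering definition to labelled graphs given just above the statement already appends exactly the clause $label'(\varphi(u))=label(u)$ (condition~1) and the clause $\port{u}{v}=\delta'_{\varphi(u)}(\varphi(v))$ for every edge $uv\in E(G)$ (condition~3). Hence the genuine content is the equivalence, \emph{under the standing hypothesis that $\varphi$ is a label- and port-preserving homomorphism}, between the local bijectivity demanded by the covering definition and the single degree condition~2. Throughout I would use that $N_G(v)$ denotes the \emph{closed} ball of radius $1$, so that $|N_G(v)|=\deg_G(v)+1$, and that the graphs are simple and of finite degree.

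For the forward implication I would start from a covering $\varphi$. Conditions~1 and~3 hold by the very definition just recalled. Condition~2 follows by comparing cardinalities: since $\varphi_{\mid N_G(v)}$ is by hypothesis a bijection from $N_G(v)$ onto $N_{G'}(\varphi(v))$, these two closed neighbourhoods have the same size, and the closed-ball convention turns $|N_G(v)|=|N_{G'}(\varphi(v))|$ into $\deg_G(v)=\deg_{G'}(\varphi(v))$.

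For the converse I would assume the three conditions and verify, for each $v$, that $\varphi_{\mid N_G(v)}\colon N_G(v)\to N_{G'}(\varphi(v))$ is a well-defined bijection (the labelled clauses of the covering definition being precisely conditions~1 and~3). Well-definedness is immediate: $\varphi(v)\in N_{G'}(\varphi(v))$, and each neighbour $w$ of $v$ gives $\varphi(v)\varphi(w)\in E(G')$ since $\varphi$ is a homomorphism, so $\varphi(w)\in N_{G'}(\varphi(v))$. For injectivity, I would first note that no neighbour $w$ can map to $\varphi(v)$, as this would make $\varphi(v)\varphi(w)$ a self-loop, impossible in a simple graph; this separates the image of $v$ from the images of its neighbours. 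Then, two distinct neighbours $w_1\neq w_2$ have $\port{v}{w_1}\neq\port{v}{w_2}$ because $\delta_v$ is injective, while condition~3 gives $\port{v}{w_i}=\delta'_{\varphi(v)}(\varphi(w_i))$; were $\varphi(w_1)=\varphi(w_2)$, the right-hand sides would coincide, contradicting the inequality. Hence $\varphi_{\mid N_G(v)}$ is injective.

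Surjectivity then follows by counting: injectivity yields $|\varphi(N_G(v))|=|N_G(v)|=\deg_G(v)+1$, whereas condition~2 together with the closed-ball convention gives $|N_{G'}(\varphi(v))|=\deg_{G'}(\varphi(v))+1=\deg_G(v)+1$; since $\varphi(N_G(v))\subseteq N_{G'}(\varphi(v))$ and both are finite sets of equal size, the inclusion is an equality, so $\varphi_{\mid N_G(v)}$ is onto. The step requiring the most care is exactly this interplay between the open reading (degrees) and the closed reading (neighbourhoods), together with the implicit appeal to finite degree that makes the cardinality argument for surjectivity legitimate; none of the individual steps is a real obstacle.
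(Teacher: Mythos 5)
The paper states Proposition~\ref{covering} without proof, treating it as a routine reformulation of the covering definition, so there is no authorial argument to compare against; your write-up supplies exactly the expected verification. Your proof is correct: conditions~1 and~3 are literally the labelled clauses of the definition, local bijectivity gives the degree equality via the closed-neighbourhood convention $|N_G(v)|=\deg_G(v)+1$, and conversely injectivity of $\varphi_{\mid N_G(v)}$ follows from simplicity (no neighbour can collide with $v$) together with condition~3 and the injectivity of $\delta_v$, after which the degree condition and finiteness of degrees force surjectivity. Your explicit flagging of the finite-degree hypothesis needed for the counting step is appropriate, since that is the only point where the equivalence could fail in a more general setting.
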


\subsection{Simplicial Complexes}
Definitions in this section are standard from discrete topology \cite{Lyn77}. Given a set $V$,
a \emph{simplex} $s$ of dimension $n\in\N$ 
is a subset of $V$ of size $n+1$.
A \emph{simplicial complex} $K$ is a collection of
simplices such that
for every simplex $s\in K$,  $s'\subseteq s$ implies $s'\in K$.
A simplicial complex is said to be connected if the graph corresponding to the set of simplices of dimension $0$  (called vertices) 
and the set  of simplices of dimension $1$  (called edges) is connected.
We consider only connected complexes.

The \emph{star} $\St(v,K)$ of a vertex $v$ in a simplicial complex $K$
is the subcomplex defined by taking the collection of simplices of $K$
containing $v$ and their subsimplices.

It also possible to have a notion of covering for simplicial complexes.
A \emph{simplicial map} $\varphi: K \to K'$ is a map
$\varphi: V(K)\to V(K')$ such that for any simplex $s=\{v_1,...,v_k\}$ in
$K$, $\varphi(s)=\{\varphi(v_1),...,\varphi(v_k)\}$  is a simplex in $K'$.

\begin{mydef}
  A simplicial map $\varphi:K \to K'$ is a \emph{simplicial covering}
    if for every vertex $v \in V(K)$, $\varphi|_{\St(v,K)}$ is a
    bijection between $\St(v,K)$ and $\St(\varphi(v),K')$.
\end{mydef}

For any simplicial complex $K$, the following proposition shows that
there always exists a ``maximal'' covering of $K$ that is
called \emph{the universal cover} of $K$.

\begin{prop}[Universal Cover]
  For any simplicial complex $K$, there exists a possibly infinite complex
  (unique up to isomorphism) denoted $\Ku$ and a simplicial covering
  $\mu:\Ku\to K$ such that, for any complex $K'$, for any simplicial
  covering $\varphi: K'\to K$, there exists a simplicial covering
  $\gamma:\Ku\to K'$ and $\varphi\circ\gamma = \mu$.
\end{prop}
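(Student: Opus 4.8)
The plan is to construct $\Ku$ explicitly as a complex of homotopy classes of edge-paths, mimicking the classical topological construction of the universal cover, then verify the covering and universality properties. First I would fix a base vertex $v_0\in V(K)$. Define the vertex set of $\Ku$ to be the set of homotopy classes $[p]$ of edge-paths $p$ in the $1$-skeleton of $K$ starting at $v_0$, where two paths are declared homotopic if one can be obtained from the other by elementary moves that insert or delete a backtrack $(u,w,u)$ or that replace two sides of a triangle (a $2$-simplex $\{u,w,z\}$) by the third side. I would then declare a set $\{[p_0],[p_1],\dots,[p_k]\}$ of classes to be a simplex of $\Ku$ precisely when the endpoints $v_0',\dots,v_k'$ of representatives form a simplex $\{v_0',\dots,v_k'\}$ of $K$ and the classes are mutually compatible, i.e.\ each $[p_j]$ is obtained from $[p_i]$ by appending the edge $v_i'v_j'$. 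The map $\mu:\Ku\to K$ sends $[p]$ to its endpoint; this is well defined since elementary moves preserve endpoints, and it is a simplicial map by construction.

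Next I would verify that $\mu$ is a simplicial covering. The key local claim is that for any class $[p]$ with endpoint $v=\mu([p])$, the restriction $\mu|_{\St([p],\Ku)}$ is a bijection onto $\St(v,K)$. Surjectivity follows because every simplex of $K$ containing $v$ can be lifted by appending the appropriate edges to $p$; injectivity follows because two liftings of the same simplex through $[p]$ must use the same appended edges and hence define the same homotopy classes. I would also check connectivity of $\Ku$, which is immediate since any class $[p]$ is joined to the base class $[v_0]$ by lifting the path $p$ edge by edge.

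For the universality statement, given any simplicial covering $\varphi:K'\to K$, I would choose a base vertex $v_0'\in\varphi^{-1}(v_0)$ and define $\gamma:\Ku\to K'$ by path-lifting: for a class $[p]$, lift the edge-path $p$ uniquely through $\varphi$ starting at $v_0'$ (unique lifting of paths is the standard consequence of the covering property, obtained by applying the local bijection $\varphi|_{\St(\cdot)}$ edge by edge), and set $\gamma([p])$ to be the endpoint of this lift. I would verify that $\gamma$ is well defined (elementary homotopy moves lift to elementary homotopy moves in $K'$, using that triangles of $K$ lift to triangles of $K'$ because $\varphi$ is a bijection on stars), that it is a simplicial covering, and that $\varphi\circ\gamma=\mu$ by the endpoint bookkeeping. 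Finally, uniqueness up to isomorphism of $\Ku$ follows formally: given two complexes with the universal property, the maps each provides produce mutually inverse isomorphisms.

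The main obstacle I expect is the correct definition of the homotopy relation on edge-paths so that the resulting complex is genuinely simply connected yet $\mu$ remains a covering; in particular one must build the triangle-filling moves into the relation (so that $\Ku$ is the cover of the clique/simplicial complex rather than merely of the $1$-skeleton graph), and then check carefully that these $2$-dimensional moves are compatible with the local-bijection requirement of a simplicial covering. Ensuring that simplices of all dimensions lift consistently — rather than only vertices and edges — is where the argument requires the most attention.
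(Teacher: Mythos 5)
The paper does not actually prove this proposition: it is imported as a standard fact of discrete topology, with \cite{Lyn77} as the reference, and the surrounding text only uses its statement. Your plan is the standard edge-path construction that those references use, and it is correct in outline: vertices of $\Ku$ as homotopy classes of edge-paths from a base vertex (with backtracking and triangle-filling as the elementary moves, so that the relation sees the $2$-skeleton and not just the $1$-skeleton), simplices given by mutually compatible classes over a simplex of $K$, $\mu$ the endpoint map, and universality by unique path-lifting through an arbitrary covering $\varphi$. You correctly identify the two places needing care — that the triangle moves must be built into the homotopy relation for $\mu$ to be a covering of the complex rather than of its underlying graph, and that simplices of all dimensions lift via the star bijections. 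Two small points a full writeup should not skip: the mutual-compatibility condition on $\{[p_0],\dots,[p_k]\}$ is consistent precisely because each triangle $\{v_i',v_j',v_l'\}$ is a face of the target simplex (so appending $v_i'v_j'$ then $v_j'v_l'$ is homotopic to appending $v_i'v_l'$); and the "formal" uniqueness of $\Ku$ requires a uniqueness-of-lifts argument to see that the composite self-covering of $\Ku$ over $K$ is an isomorphism, not just the existence clause of the universal property.
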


Given a graph $G=(V,E)$, the \emph{clique complex} of $G$, denoted 
$\X(G)$ is the simplicial complex formed by the cliques of $G$.
There is a strong relationship between a graph with binoculars labelling and 
its clique complex.

\begin{prop}
Let $G$ and $H$ be two graphs with binoculars labelling.  
Let $\varphi:V(G)\longrightarrow V(H)$ a map. 
The map $\varphi$ is a covering from $G$ to $H$ if and only if 
$\varphi$ is a simplicial covering from $\X(G)$ to $\X(H)$.
\end{prop}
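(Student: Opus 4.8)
The plan is to translate both notions into statements about the closed neighbourhoods of $G$ and $H$, using the binoculars labelling as the dictionary between adjacency among neighbours and the simplices of a star. First I would record three elementary correspondences. The $1$-skeleton of $\X(G)$ is exactly $G$, so a map $\varphi\colon V(G)\to V(H)$ is the same data on both sides. The vertex set of the closed star $\St(v,\X(G))$ is precisely $N_G(v)$: every simplex of $\X(G)$ containing $v$ is a clique inside $N_G(v)$, and conversely each neighbour $w$ of $v$ yields the edge $\{v,w\}$ whose vertex $\{w\}$ lies in the star. More generally the simplices of $\St(v,\X(G))$ are exactly the cliques of $G$ contained in $N_G(v)$ and containing $v$, together with their faces, so that for $w,w'\in N_G(v)$ one has $\{w,w'\}\in\St(v,\X(G))$ iff $ww'\in E(G)$. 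Finally, by definition the binoculars label $\nu(v)$ is the port-numbered subgraph induced on $N_G(v)$; hence the adjacencies among the neighbours of $v$ --- equivalently, which $1$-simplices belong to $\St(v,\X(G))$ --- are read off from $\nu(v)$ through the ports $\port{v}{\cdot}$.

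For the forward implication, assume $\varphi$ is a covering from $G$ to $H$ and use the characterisation of Proposition~\ref{covering}. Since $\varphi$ is a homomorphism and $\varphi|_{N_G(v)}$ is a bijection, the image of a clique is a clique of the same size, so $\varphi$ is a simplicial map and is injective on the vertices, hence on the simplices, of each star. It remains to check surjectivity of $\varphi|_{\St(v,\X(G))}$ onto $\St(\varphi(v),\X(H))$. Given a simplex $t\ni\varphi(v)$ of $\X(H)$ inside $N_H(\varphi(v))$, pull it back through the bijection $\varphi|_{N_G(v)}$ to a set $s\subseteq N_G(v)$ containing $v$. That $s$ is a clique is exactly where binoculars are needed: label preservation $\nu(v)=\nu(\varphi(v))$ together with port preservation $\port{v}{w}=\delta'_{\varphi(v)}(\varphi(w))$ forces $ww'\in E(G)\iff\varphi(w)\varphi(w')\in E(H)$ for all $w,w'\in N_G(v)$, so $s$ is a clique and $\varphi(s)=t$.

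For the converse, assume $\varphi$ is a simplicial covering and verify the three conditions of Proposition~\ref{covering}. The bijection $\varphi|_{\St(v,\X(G))}$ sends $0$-simplices to $0$-simplices; since a collapse $\varphi(a)=\varphi(b)$ would make the distinct simplices $\{a\}$ and $\{a,b\}$ of a star share an image, the bijection restricts to a bijection $N_G(v)\to N_H(\varphi(v))$, giving the local bijection and the degree condition, and in particular $\varphi(u)\neq\varphi(v)$ on edges so that $\varphi$ is a homomorphism. Reading the $1$-simplices of the stars through the correspondence above shows that this bijection preserves adjacency among neighbours in both directions, i.e. it is an isomorphism of the induced balls $N_G(v)\to N_H(\varphi(v))$. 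The delicate point --- and the main obstacle --- is to promote this structural isomorphism to equality of binoculars labels and of port numbers, which is what Proposition~\ref{covering} demands: this is precisely where one uses that the simplicial covering respects the labelling $\nu$ (so that $\nu(v)=\nu(\varphi(v))$) and that $\nu(v)$ encodes the ports, whence $\port{v}{w}=\delta'_{\varphi(v)}(\varphi(w))$ and $\varphi$ is a covering from $G$ to $H$.
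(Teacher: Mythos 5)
The paper states this proposition without proof, so there is no argument of record to compare yours against; I assess the proposal on its own terms. Your strategy is the right one and surely the intended one: reduce both notions to statements about closed neighbourhoods, observe that in a clique complex the star is determined by the induced subgraph on $N_G(v)$ (for $w,w'\in N_G(v)$ one has $\{w,w'\}\in \St(v,\X(G))$ iff $ww'\in E(G)$), and use the binoculars label to upgrade the local bijection $N_G(v)\to N_H(\varphi(v))$ to an isomorphism of induced neighbourhoods, which is exactly what makes it a bijection of stars. The forward direction is essentially complete; two cosmetic points only: a simplex of $\St(\varphi(v),\X(H))$ need not contain $\varphi(v)$ (it is a face of one that does, which suffices since your pulled-back clique has all its faces in $\St(v,\X(G))$), and the injectivity argument in the converse is cleaner with the two $0$-simplices $\{a\}$ and $\{b\}$, since the simplex $\{a,b\}$ you invoke need not exist when $a$ and $b$ are non-adjacent neighbours of $v$.

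The one step that does not hold as written is the final ``whence'' of the converse: $\nu(v)=\nu(\varphi(v))$ does \emph{not} imply $\port{v}{w}=\delta'_{\varphi(v)}(\varphi(w))$. Equality of labels only asserts that \emph{some} port-preserving isomorphism $N_G(v)\to N_H(\varphi(v))$ exists, not that $\varphi$ restricts to one. Concretely, on a $4$-cycle $v\,a\,v'\,b$ one can choose the port numbering so that the reflection exchanging $v$ and $v'$ while fixing $a$ and $b$ is a simplicial automorphism of the clique complex preserving every binoculars label $\nu$, yet it sends the port-$1$ neighbour of $v$ to the port-$2$ neighbour of $v'$; it is therefore not a covering in the sense of Proposition~\ref{covering}. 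The statement is only correct under the paper's implicit convention that the clique complex of a graph with binoculars labelling carries the same decorations as the graph --- vertex labels $\nu$ and port numbers at edge ends --- and that a simplicial covering of such objects preserves them, exactly as the covering definition was extended to labelled graphs. Under that convention the label and port conditions are hypotheses common to both sides and there is nothing to derive; the genuine mathematical content is then precisely the equivalence of the two local-bijection conditions, which your argument does establish. You should state this convention explicitly rather than attempt to deduce port preservation from $\nu$-preservation.
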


From standard distributed computability results
\cite{YKsolvable,BVanonymous,CGM12}, it is known that the structure of
the covering maps explains what can be computed or not.
So in order to investigate the structure induced by coverings of
graphs with binoculars labelling, we will investigate the structure of
simplicial coverings of simplicial complexes. 
We will use interchangeably (if context
permits) $G$ to denote the graph with binoculars labelling or its
clique complex. We will also call simply ``coverings'' the simplicial
coverings. 

Note that not all simplicial complexes can be obtained as clique
complexes, however most of the results and vocabulary
on simplicial complexes apply. 
In particular, we define the \emph{universal cover} of a graph with
binoculars labelling to be the universal cover of its clique complex.
Note that the universal cover as a graph with
binoculars labelling can differ from its universal cover as a graph
without labels. Consider for example, the triangle network.

\parag{Homotopy}

We say that two paths $p$ and $p'$ in a complex $K$ are related by
an \emph{elementary homotopy} if one of two following conditions holds
(definitions from \cite{BH}):
\begin{compactenum}[(i)]
\item[(Backtracking)] $p'$ is obtained from $p$ by inserting or
  deleting a subpath of the form $(u,v,u)$ where $u,v$ are vertices of
  $K$.
\item[(Pushing across a 2-cell)] $p$ and $p'$ can be expressed as
   $p$ = $(v_0,\dots,v_i,q,v_j,\dots,$ $v_k)$ and $p'$ = $(v_0,\dots,v_i,q',v_j,\dots,v_k)$
  where $q$ and $q'$ are two subpaths such that the path $q^{-1}q'$ is
  a triangle of $K$.   (Note that $q$ or $q'$ may be  empty.)
\end{compactenum}

A loop $c = (u_1, \ldots, u_{i-1}, u_i, u_{i+1}, \ldots, u_k)$ is also
related by an elementary homotopy to the loop $c = (u_1, \ldots,
u_{i-1}, u_{i+1}, \ldots, u_k)$ when $u_i = u_{i+1}$.

We say that two paths (or loops) $p$ and $p'$ are homotopic equivalent
if there is a sequence of elementary homotopic paths $p_1,...,p_k$
such that $p_1=p$, $p_k=p'$ and, for every $1\leq i<k$, $p_i$ is
related to $p_{i+1}$ by an elementary homotopy.

A loop is $k-$contractible, $k\in\N$ if it can be reduced to a vertex
by a sequence of $k$ elementary homotopies.  A loop is contractible if
there exists $k\in\N$ such that it is $k-$contractible.

\parag{Simple Connectivity}
A \emph{simply connected} complex is a complex whose paths with same
endpoints are all homotopy equivalent. i.e., every cycle can be
reduced to a vertex by a finite sequence of elementary homotopies.  
These complexes have lots of interesting combinatorial and
topological properties. In the following, we rely on the fundamental result
\begin{prop}[\cite{Lyn77}]
 Let $K$ be a connected complex, then $K$ is isomorphic to its
 universal cover $\Ku$ if and only if it is simply connected.  
\end{prop}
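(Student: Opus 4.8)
The plan is to derive both implications from the unique path-lifting and homotopy-lifting properties enjoyed by every simplicial covering, which are themselves immediate consequences of the defining star-bijection condition. So the first step is to establish these lifting lemmas for the universal cover map $\mu:\Ku\to K$. Fix a base vertex $v_0\in V(K)$ and a vertex $\widetilde{v}_0\in\mu^{-1}(v_0)$. Given a path $p=(v_0,\dots,v_k)$ in $K$, I lift it inductively: if a prefix has been lifted to a path ending at some $\widetilde{w}$ with $\mu(\widetilde{w})=v_i$, then since $\mu$ restricts to a bijection $\St(\widetilde{w},\Ku)\to\St(v_i,K)$, the neighbour $v_{i+1}$ of $v_i$ has a unique preimage adjacent to $\widetilde{w}$, which becomes the next vertex of the lift. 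This produces a unique lift $\widetilde{p}$ of $p$ starting at $\widetilde{v}_0$. I then upgrade this to homotopy lifting: homotopy equivalent paths from $v_0$ lift to paths with the same endpoint. It suffices to treat a single elementary homotopy. A backtracking move inserting or deleting a subpath $(u,v,u)$ corresponds, by uniqueness of lifts, exactly to inserting or deleting a backtracking subpath upstairs. A push across a triangle $t$ of $K$ lifts because $t$ contains the vertex $v_i$ at which the two replaced subpaths meet, hence lies in $\St(v_i,K)$; the star bijection then supplies a unique triangle $\widetilde{t}$ of $\Ku$ above $t$ inside $\St(\widetilde{v}_i,\Ku)$, and pushing across $\widetilde{t}$ realises the lifted elementary homotopy. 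Consequently lifts of homotopy equivalent paths share their endpoints.

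For the direction $(\Leftarrow)$, assume $K$ is simply connected and let me show $\mu$ is an isomorphism. As $\Ku$ is connected and nonempty and $K$ is connected, $\mu$ is surjective. For injectivity on vertices, suppose $\mu(\widetilde{a})=\mu(\widetilde{b})=v$. Choosing paths in $\Ku$ from $\widetilde{v}_0$ to $\widetilde{a}$ and to $\widetilde{b}$ and projecting them by $\mu$ produces two paths $p,q$ from $v_0$ to $v$ in $K$; since $K$ is simply connected, $p$ and $q$ are homotopy equivalent, so by homotopy lifting their unique lifts from $\widetilde{v}_0$ end at the same vertex. But these lifts are precisely the chosen paths, whence $\widetilde{a}=\widetilde{b}$. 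A covering that is bijective on vertices is a simplicial isomorphism, since the star bijections $\St(\widetilde{w},\Ku)\to\St(\mu(\widetilde{w}),K)$ then assemble into a global simplicial inverse. Hence $K\cong\Ku$.

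For the direction $(\Rightarrow)$, it suffices to prove that the universal cover is always simply connected, for then $K\cong\Ku$ transports this property to $K$. Recall the standard construction of the universal cover, in which the vertices of $\Ku$ above a vertex $v$ are the homotopy classes of paths from $v_0$ to $v$, with $\widetilde{v}_0$ the class of the trivial path (by the uniqueness clause of the Universal Cover proposition this agrees with the $\Ku$ used above). Let $\widetilde{c}$ be a loop of $\Ku$ based at $\widetilde{v}_0$ and put $c=\mu(\widetilde{c})$, a loop of $K$ at $v_0$. By the lifting description, $\widetilde{c}$ is the lift of $c$ starting at $\widetilde{v}_0$, and it ends at the vertex represented by the class of $c$; returning to $\widetilde{v}_0$ therefore forces that class to be trivial, i.e. $c$ is contractible in $K$. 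A contracting sequence of elementary homotopies for $c$ then lifts, by the homotopy-lifting lemma, to a contracting sequence for $\widetilde{c}$. Thus every loop of $\Ku$ is contractible and $\Ku$ is simply connected.

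I expect the crux to be the homotopy-lifting step for the push-across-a-triangle move, since this is where the simplicial (rather than merely graphical) nature of the covering is essential: one must verify that a $2$-cell of $K$ admits a unique lift as a $2$-cell of $\Ku$ sitting in the star of the already-lifted meeting vertex, which is exactly the content of the star-bijection condition for simplicial coverings. Once lifting of both elementary moves is secured, both implications are short, and the only remaining routine points are the connectivity and surjectivity of $\mu$ and the promotion of a vertex-bijective covering to a simplicial isomorphism.
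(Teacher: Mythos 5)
The paper does not prove this proposition at all: it is quoted from Lyndon--Schupp and used as a black box, so there is no in-paper argument to compare yours against. Your proof is the standard one and is essentially correct. The backward direction is complete as written: unique path lifting and the lifting of both elementary homotopies follow from the star-bijection condition (for the push across a triangle, the triangle lies in $\St(v_i,K)$ and hence has a unique lift in $\St(\widetilde v_i,\Ku)$ containing $\widetilde v_i$, exactly as you say), and injectivity of $\mu$ on vertices combined with the star bijections upgrades the covering to a simplicial isomorphism. The one place where you import something substantial that the paper does not supply is the forward direction: to conclude that $\Ku$ is simply connected you identify $\Ku$ with the complex of homotopy classes of paths from a basepoint. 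The uniqueness clause of the paper's universal-cover proposition only licenses that identification once you have verified that the path-class complex is itself a simplicial complex, that its projection is a simplicial covering, and that it satisfies the universal property; that verification is the actual content of the construction in Lyndon--Schupp and does not follow from anything stated in the paper. I would call this a reliance on standard cited material rather than an error --- the paper makes exactly the same appeal, only wholesale --- but if you want the argument to be self-contained you must either carry out that construction or derive the simple connectivity of $\Ku$ from the universal property alone, which is not shorter. Everything else (surjectivity of $\mu$ on a connected target, the reduction of homotopy lifting to the two elementary moves, the endpoint argument for injectivity) is routine and correctly handled.
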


In Figure~\ref{img:SC}, we present two examples of simplicial covering
maps, $\varphi$ is from the universal cover, and $\varphi'$ shows the general property of coverings that is that the number of vertices of the bigger complex is a multiple (here the double) of the number of vertices of the smaller complex.

\begin{figure}[t]
  \centering
  \includegraphics[height=1.66cm]{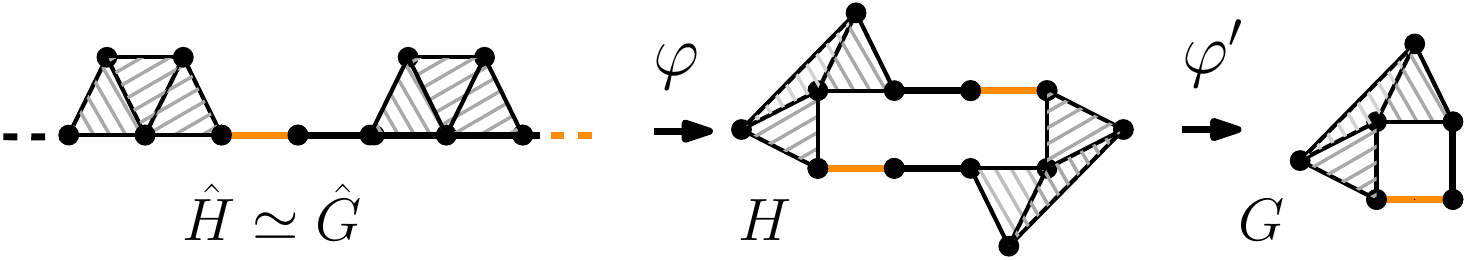}
  \caption{Simplicial Covering\label{img:SC}}
\end{figure}

\medskip

We define $\FNT=\{G\mid$ the universal cover of $\X(G)$ is finite $\}$
and $\INF=\{G \mid G$ is finite and the universal cover of $\X(G)$ is
infinite $\}$.  Note that \FNT admits one interesting sub-class
$\SC=\{G\mid G$ is finite and $\X(G)$ is simply connected$\}$.

\parag{Disk Diagrams}
Given a loop $c = (v_0, v_1, \ldots, v_n)$ in a simplicial complex
$K$, a \emph{disk diagram} $(D,f)$ of $c$ consists of a $2$-connected
planar triangulation $D$ and a simplicial map $f$ from $D$ to $K$ such
that the external face of $D$, denoted by $\partial D$, is a simple
cycle $(v_0',v_1',\ldots,v_n')$ such that $f(v'_i) =v_i$ for each
$0\leq i \leq n$. 

The \emph{area} of a disk diagram $D$, denoted $\area(D)$, is equal to
the number of faces of $D$.
A minimal disk diagram $(D,f)$ for a cycle $c$ is a disk diagram for
$c$ minimizing $\area(D)$.  The area of a cycle $c$, denoted by
$\area(c)$, is equal to the area of the minimal disk diagram for $c$.

For any $k-$contractible loop $c$, note that $\area(c) \leq k$, and
conversely if $\area(c) \leq k$, then $c$ is
$(k+|c|)$-contractible. Consequently, we have the following
alternative definition for simple connectivity.

\begin{prop}\label{prop:SC-iff-DD}
A complex $K$ is simply connected if and only if each loop $c$ of $K$
has a disk diagram. 
\end{prop}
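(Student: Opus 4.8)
The plan is to read the statement off the preceding remark, which already records the two-way bridge between $k$-contractibility and the area of a loop; the only extra observation needed is that ``$\area(c)$ is defined and finite'' means exactly ``$c$ admits a disk diagram''. I would therefore establish the two implications separately and keep the homotopy machinery inside the remark.

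For the direction ``simply connected $\Rightarrow$ every loop has a disk diagram'', I would take an arbitrary loop $c$ and use simple connectivity to write it as $k$-contractible for some $k\in\N$. The remark then yields $\area(c)\le k$, so $\area(c)$ is finite and, in particular, a disk diagram of $c$ exists. For the converse, I would start from a loop $c$ together with one of its disk diagrams, choose the diagram of minimal area so that $\area(c)$ is a well-defined finite integer, and apply the converse half of the remark to conclude that $c$ is $(\area(c)+|c|)$-contractible, hence contractible. Since $c$ is arbitrary, every loop of $K$ reduces to a vertex by finitely many elementary homotopies, which is the definition of simple connectivity.

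The real content---and the place where I expect the actual work---is justifying the remark itself, i.e. the van Kampen-type correspondence between contraction sequences and disk diagrams. For ``$k$-contractible $\Rightarrow$ disk diagram of area $\le k$'' I would induct on the length of the contraction, starting from the trivial loop (which bounds the degenerate one-vertex diagram) and checking that ``bounds a disk diagram'' is preserved by each elementary homotopy: a push across a $2$-cell is realized by gluing a single triangular face of $K$ along the boundary, increasing the area by exactly one, whereas a backtracking move adds no face and is absorbed by adjusting the boundary identification induced by $f$. For the opposite bound I would induct on $\area(D)$, at each step peeling off a boundary (``ear'') triangle $t$ of the planar triangulation---which exists for any triangulated disk with at least one face---so that its removal appears on the boundary loop as a push across the $2$-cell $f(t)$, transported into $K$ by $f$, with a backtrack cancellation absorbing the degenerate cases.

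The one delicate point is the bookkeeping in the forward construction: keeping $D$ a genuine $2$-connected planar triangulation with simple boundary at every stage, and in particular handling backtracking moves, which produce folds rather than new faces. The $2$-cell pushes, by contrast, correspond cleanly to single faces and cause no trouble, so if the remark is taken as given the Proposition is immediate from the two displayed deductions above.
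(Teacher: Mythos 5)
Your proposal is correct and follows exactly the route the paper intends: the paper gives no explicit proof of Proposition~\ref{prop:SC-iff-DD}, presenting it as an immediate consequence of the preceding remark that a $k$-contractible loop has area at most $k$ and, conversely, a loop of area at most $k$ is $(k+|c|)$-contractible, which is precisely the two deductions you display. Your additional sketch of the remark itself (inducting on the contraction sequence one way and peeling boundary triangles of the diagram the other way) is the standard van Kampen--type argument and correctly identifies the only delicate points (degenerate diagrams for trivial loops, and backtracking moves contributing no area), which the paper likewise leaves implicit.
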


In fact, in order to check the simple connectivity of a simplex $K$,
it is enough to check that all its simple cycles are contractible. 

\begin{prop}\label{prop:cycle_to_loop}
A complex $K$ is simply connected if and only if every simple cycle is
contractible.
\end{prop}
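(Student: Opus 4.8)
The plan is to prove Proposition~\ref{prop:cycle_to_loop} by reducing the contractibility of an arbitrary loop to that of the simple cycles it contains, since by Proposition~\ref{prop:SC-iff-DD} and the definition of simple connectivity it suffices to show that every loop is contractible. The forward direction is immediate: if $K$ is simply connected, then every loop is contractible, and in particular every simple cycle is. So the substance is the converse, where I would assume every simple cycle is contractible and deduce that an arbitrary loop $c$ is contractible.

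First I would reduce a loop to a genuine cycle. Using the elementary homotopy that deletes a repeated vertex $u_i=u_{i+1}$ (stated in the Backtracking/repeated-vertex paragraph), any loop can be turned by a finite sequence of elementary homotopies into a cycle in which no two consecutive vertices coincide, i.e.\ a cycle in the graph-theoretic sense where each step is a genuine edge. Thus it is enough to prove that every cycle is contractible, given that every \emph{simple} cycle is.

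Next I would argue by induction on the length $|c|$ of the cycle $c = (v_0, \ldots, v_k)$ with $v_0 = v_k$. If $c$ is simple, we are done by hypothesis. Otherwise there exist indices $0 \le i < j \le k$ with $(i,j)\neq(0,k)$ and $v_i = v_j$ that are ``innermost'', so that the subpath $c_1 = (v_i, v_{i+1}, \ldots, v_j)$ is a simple cycle of length strictly less than $|c|$. Splitting at this repeated vertex writes $c$ as a concatenation of $c_1$ with the shorter loop $c_2 = (v_0, \ldots, v_i, v_{j+1}, \ldots, v_k)$ (obtained by cutting out the subpath $c_1$). By the induction hypothesis the simple cycle $c_1$ is contractible, and $c_2$ is a strictly shorter loop, hence contractible after the reduction step above. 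Contracting $c_1$ to the vertex $v_i$ by a sequence of elementary homotopies leaves a loop homotopic to $c_2$, which then contracts to a vertex; composing the two sequences of elementary homotopies contracts $c$.

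The main obstacle I anticipate is making the splitting-at-a-repeated-vertex step fully rigorous at the level of elementary homotopies rather than at the level of the fundamental group. Concretely, one must verify that inserting a backtracking subpath $(v_i, v_j)$-style detour (justified because $v_i = v_j$) legitimately decomposes $c$ into $c_1$ and $c_2$ using only the two allowed elementary moves, and that contracting the embedded simple cycle $c_1$ does not interfere with the rest of $c$. This is essentially bookkeeping, but it requires care to ensure the elementary homotopies stay within $K$ and that the ``innermost repeated vertex'' choice indeed yields a \emph{simple} subcycle; the rest is a clean strong induction on length.
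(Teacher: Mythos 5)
Your proof is correct, but it assembles the pieces differently from the paper. Both arguments proceed by induction on the length of the loop and split it at a repeated vertex; you, however, split off an \emph{innermost} repeated pair, which guarantees that the excised sub-loop $c_1$ is a simple cycle (so the standing hypothesis applies to it directly), and you recombine by splicing the elementary homotopies contracting $c_1$ into the ambient loop. That splicing is legitimate: every elementary move in the paper's definition replaces a contiguous subsequence and preserves the values of the first and last vertices, so any contraction of $c_1$ necessarily terminates at its basepoint $v_i=v_j$ and extends verbatim to a homotopy from $c$ to the shorter loop $c_2$, which the induction hypothesis then handles. The paper instead splits $c$ at a repeated occurrence of the basepoint $u_0$ into two shorter loops (neither of which need be simple), applies the induction hypothesis to both, and recombines by gluing their \emph{disk diagrams} along the common vertex into a disk diagram for $c$, concluding via Proposition~\ref{prop:SC-iff-DD}. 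Your route is more elementary --- it needs no disk diagrams and avoids the paper's slightly glossed ``without loss of generality'' rotation of the loop that places the repetition at the basepoint --- at the price of the bookkeeping you flag, which is indeed routine for the reason above. The paper's route is natural in context, since disk diagrams and their area bounds are the working currency of the rest of the paper (notably Lemma~\ref{lem-cas-de-base}).
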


\begin{proof}
Consider a loop $c = (u_0, u_1, \ldots, u_k, u_0)$ that is not a
simple cycle. We prove the result by induction on the length of
$|c|$. Without loss of generality, assume that there exists $1 \leq
i \leq k$ such that $u_0 = u_i$. Let $c_1=(u_0,u_1,\ldots, u_{i-1},u_0)$
and $c_2 = (u_i, \ldots, u_n, u_i)$.  Since $|c_1| < k$ and $|c_2| <
k$, both $c_1$ and $c_2$ are contractible by the induction
hypothesis. If $1 \leq i \leq 2$, then $c$ is elementarily homotopic
to $c_2$ and we are done. Similarly, if $k-1 \leq i \leq k$, $c$ is
elementarily homotopic to $c_1$. Suppose now that $3 \leq i \leq
k-2$. We know that $c_1$ has a disk diagram $(D_1,f_1)$ such that
$\partial D_1 = (u'_0, \ldots, u'_{i-1})$ and $f_1(u'_j) = u_j$ for
all $0 \leq j \leq i-1$. Similarly, there exists a disk diagram
$(D_2,f_2)$ for $c_2$ such that $\partial D_2 = (u'_i, \ldots,
u'_{k})$ and $f_1(u'_j) = u_j$ for all $i \leq j \leq k$.  Consider
the graph $D$ obtained as follows: $V(D) = V(D_1) \cup V(D_2)$, $E(D)
= E(D_1) \cup E(D_2) \cup \{u'_0u'_i,u'_ku'_0,u'_0u'_i\}$. Note that
$D$ is a planar triangulation such that $\partial D = (u'_0, \ldots,
u'_{i-1},u'_i, \ldots, u'_k)$ (see
Figure~\ref{img:cycle-to-loop}). For any $u' \in V(D)$, let $f(u') =
f_1(u')$ if $u' \in V(D_1)$ and $f(u') = f_2(u')$ otherwise. For any
edge $u'v' \in E(D)$, either $u', v' \in V(D_1)$, or $u', v' \in
V(D_2)$, or $u'v' \in \{u'_{i-1}u'_i,u'_ku'_0,u'_0u'_i\}$. In the
first case, $f(u') = f_1(u')$ is either equal or adjacent to $f(v') =
f_1(v')$ since $f_1$ is a simplicial map. In the second case, for the
same reasons, either $f(u') = f(v')$ or $f(u')f(v') \in E(G)$. In the
last case, we know that $f(u'_i) = f(u'_0) = u_0$ is either equals or
adjacent to $u_{i-1} = f(u'_{i-1})$ (resp. to $u_{k} =
f(u'_{k})$. Consequently, $f$ is a simplicial map from $D$ to $G$ such
that $f(\partial D) = c$. Therefore $c$ has a disk diagram $(D,f)$ and
thus $c$ is contractible. 
\end{proof}

\begin{figure}[h!]
  \centering \includegraphics[height=3cm]{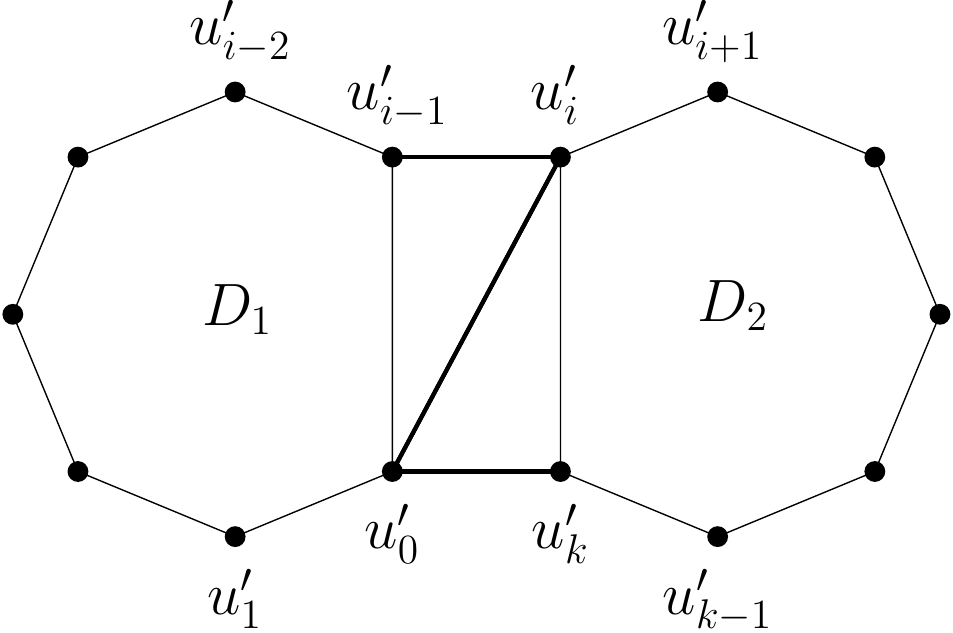} \caption{To
  the proof of
  Proposition~\ref{prop:cycle_to_loop}\label{img:cycle-to-loop}}
\end{figure}

\section{First Impossibility Result and Lower Bound}
\label{sect:NC}
First, in Lemma \ref{lemma:lifting}, we propose a Lifting Lemma for
simplicial coverings.  This lemma shows that every execution on a
graph $G$ can be lifted up to every simplicial covering $G'$ of $G$,
and in particular, to its universal cover $\Gu$.

Given an algorithm \A, and a network $G$, a starting point 
$v\in V(G)$, we denote by $(\Lambda_G^i,pos_G^i)$ the state of the
mobile agent at step $i\in\N$, where $\Lambda_G^i$ is its memory and
$pos_G^i$ is its current location in $V(G)$.

\begin{lem}
  Let $(G,\delta)$ and $(G',\delta')$ be two graphs and let $\varphi:
  G'\to G$ be a covering. Let $\A$ be an algorithm.  Let $i\in\N$, if
  $\Lambda^i_{G} = \Lambda^i_{G'}$ and $\varphi(pos^i_{G'}) = pos^i_{G}$
  then $\Lambda^{i+1}_{G} = \Lambda^{i+1}_{G'}$ and
  $\varphi(pos^{i+1}_{G'}) = pos^{i+1}_{G}$.
\end{lem}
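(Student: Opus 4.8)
The plan is to prove this Lifting Lemma by analyzing the cyclic behaviour of the agent (obtain local information, compute, move) and showing that each phase is preserved under the covering map $\varphi$. The key observation is that an algorithm $\A$ for a mobile agent is a deterministic function of the agent's \emph{view}: at each step, the agent decides its next action based solely on its current memory $\Lambda$ and the local data it can sense, namely the label of its current location, its degree, and the port number through which it arrived (recalling the agent may backtrack). So the strategy is to verify that these three pieces of local information coincide at $pos^i_{G'}$ and at $pos^i_G = \varphi(pos^i_{G'})$, from which the equality of the computed output (the next memory state and the chosen outgoing port) follows immediately from determinism.

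First I would invoke the hypothesis $\Lambda^i_G = \Lambda^i_{G'}$ and $\varphi(pos^i_{G'}) = pos^i_G$, and set $v' = pos^i_{G'}$, $v = pos^i_G = \varphi(v')$. By Proposition~\ref{covering}, since $\varphi$ is a covering we have $label(v') = label'(v)$ (the binoculars labels agree), and $v'$ and $v$ have the same degree; moreover $\varphi$ restricts to a bijection $N_{G'}(v') \to N_G(v)$ that preserves port numbers, i.e.\ $\delta'_{v'}(w') = \delta_v(\varphi(w'))$ for every neighbour $w'$ of $v'$. Thus the agent at $v'$ in $G'$ and at $v$ in $G$ sense exactly the same local information, so the deterministic computation of $\A$ produces the same new memory, giving $\Lambda^{i+1}_{G'} = \Lambda^{i+1}_G$, and the same output port number $q$.

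It remains to check that the move respects $\varphi$. If $\A$ chooses to move through port $q$, then in $G'$ the agent steps to the unique neighbour $w'$ of $v'$ with $\delta'_{v'}(w') = q$, and in $G$ it steps to the unique neighbour $w$ of $v$ with $\delta_v(w) = q$. By the port-preserving bijection from Proposition~\ref{covering}, $\varphi(w')$ is exactly the neighbour of $v$ carrying port number $q$, hence $\varphi(w') = w$; that is, $\varphi(pos^{i+1}_{G'}) = pos^{i+1}_G$. (The backtracking case is subsumed, since the incoming port is itself part of the local data already shown to agree.)

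The main obstacle, and the step requiring the most care, is establishing that the agent genuinely acts as a function of only this local data and not of any hidden global position information: this is where the anonymity of the model is essential, and it is exactly what lets the two executions stay synchronized. Once one accepts that $\A$'s transition depends only on $(\Lambda, label, \deg, \textrm{incoming port})$, the argument is a routine application of the three covering conditions of Proposition~\ref{covering}, and the lemma follows. The full lifting of an entire execution to $\Gu$ is then obtained by a trivial induction on $i$, using the fact that at step $0$ both agents start with the same initial memory and $\varphi$ maps the lifted homebase to the homebase.
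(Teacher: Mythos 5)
Your proof is correct and follows essentially the same route as the paper's: both arguments observe that the covering (via Proposition~\ref{covering}) forces the local input of $\A$ — binoculars label, degree, and port numbers — to coincide at $pos^i_{G'}$ and $\varphi(pos^i_{G'})$, so determinism yields the same new memory and the same outgoing port, and the port-preserving bijection on neighbourhoods then gives $\varphi(pos^{i+1}_{G'}) = pos^{i+1}_{G}$. Your version is merely more explicit about which covering conditions are used at each step.
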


\begin{proof}\label{proof:lifting}
The proof is standard
\cite{YKsolvable,BVanonymous,CGM12}. 
  Let $(G,\delta)$ and $(G',\delta')$ be two graphs and let $\varphi: G\to G'$ be a
  covering. Let $\A$ be an
  algorithm.
  Let $i\in\N$, assume $\Lambda^i_{G} = \Lambda^i_{G'}$ and
  $\varphi(pos^i_{G'}) = pos^i_{G}$.

  Since $\varphi$ is a covering, the graph induced by
  adjacent vertices of $pos_{G}^i$ (in $G$) is isomorphic to the graph
  induced by adjacent vertices of $pos_{G'}^i$ (in $G'$).
  Thus, $\A$ has the same input in both graphs
 and   consequently, it computes the same value and the memory is
  $\Lambda^{i+1}_G=\Lambda^{i+1}_{G'}$

  It also follows the same port number $j$, in both graphs.
  Let $v$ (resp $v'$) such that $\delta_{pos^i_G}(v)=j$ ($\delta'_{pos^i_{G'}}(v')=j$).
  Since $\varphi$ is a covering we have, by definition of a covering,
  that $\varphi(v')=v$.
\end{proof}

By iterating the previous lemma, we get the Lifting Lemma below, 
\begin{lem}[Lifting Lemma]\label{lemma:lifting} Let $\A$ be an
  Exploration algorithm and $G$ be a graph.
  For every graph $G'$ such that there exists a
  covering $\varphi:G'\to G$, for every starting points $v\in V(G)$ and 
  $u\in V(G')$ such that $v=\varphi(u)$,
  the executions of \A in $G$ and $G'$ are such that 
  $\forall i\in\N$, $\Lambda^i_G = \Lambda^i_{G'}$ and
  $\varphi(pos^i_{G'}) = pos_G^i$.
\end{lem}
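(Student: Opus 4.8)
The plan is a straightforward induction on the step index $i\in\N$, using the one-step lemma stated just above as the inductive step; indeed, the whole point of that lemma's formulation is that its hypothesis and its conclusion have exactly the same shape, so it can be iterated without modification.

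First I would settle the base case $i=0$. At the start of the execution the agent is in its fixed initial state, so the only graph-dependent part of its input is the label of the homebase. Since $\varphi$ is a covering, Proposition~\ref{covering} gives $label(u)=label(v)$ where $v=\varphi(u)$; hence the agent reads the same initial local information in both $G$ and $G'$ and begins with identical memory, $\Lambda^0_G=\Lambda^0_{G'}$. Moreover $pos^0_{G'}=u$ and $pos^0_G=v$, so $\varphi(pos^0_{G'})=\varphi(u)=v=pos^0_G$. This establishes the two conjuncts at $i=0$.

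For the inductive step, suppose the statement holds at stage $i$, that is $\Lambda^i_G=\Lambda^i_{G'}$ and $\varphi(pos^i_{G'})=pos^i_G$. These are precisely the hypotheses of the preceding single-step lemma, applied to the covering $\varphi$ and the algorithm $\A$. That lemma then yields $\Lambda^{i+1}_G=\Lambda^{i+1}_{G'}$ and $\varphi(pos^{i+1}_{G'})=pos^{i+1}_G$, which is the statement at stage $i+1$. By induction the conclusion holds for every $i\in\N$.

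There is no genuine obstacle here: all the substantive content --- that equal memory together with $\varphi$-compatible positions forces the agent to see isomorphic neighbourhoods (by the covering property), to compute the same new memory, and to follow the same port number to $\varphi$-compatible successors --- has already been discharged in the one-step lemma. The only points worth checking explicitly are that the initial states really do coincide, which reduces to label preservation under coverings, and that the inductive hypothesis matches the single-step lemma's hypothesis verbatim, so that the invariant $\bigl(\Lambda^i_G=\Lambda^i_{G'}\ \text{and}\ \varphi(pos^i_{G'})=pos^i_G\bigr)$ propagates through the iteration without needing to be strengthened.
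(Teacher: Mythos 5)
Your proof is correct and matches the paper's intent exactly: the paper simply states that the Lifting Lemma is obtained ``by iterating the previous lemma,'' which is precisely your induction, with the one-step lemma as the inductive step. Your explicit treatment of the base case $i=0$ (initial memories coincide because coverings preserve the binoculars labels, and $\varphi(u)=v$ by choice of starting points) fills in the only detail the paper leaves implicit.
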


Using the Lifting Lemma above, we are now able to prove a first result
about explorable graphs and the move complexity of their exploration.

\begin{prop}\label{prop:NC}
Any explorable graph $G$ belongs to $\FNT$, and the move complexity is
at least the size $|V(\Gu)|$ of its universal cover $\Gu$.
\end{prop}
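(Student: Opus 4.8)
The plan is to transport a halting execution on $G$ up to the universal cover $\Gu$ by means of the Lifting Lemma, and then to combine the very definition of an Exploration algorithm with a crude bound on how many vertices a short walk can reach.

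First I would fix an Exploration algorithm $\A$ that halts on $G$ from every starting point (such an $\A$ exists since $G$ is explorable), fix a homebase $v_0\in V(G)$, and let $N$ be the finite number of moves that $\A$ performs before halting on $G$ started at $v_0$. Using the proposition that identifies graph coverings of binoculars-labelled graphs with simplicial coverings of their clique complexes, I would regard the universal cover of $\X(G)$ as a binoculars-labelled graph $\Gu$ together with a covering $\mu:\Gu\to G$, and pick $u_0\in V(\Gu)$ with $\mu(u_0)=v_0$. By the Lifting Lemma, the execution of $\A$ on $\Gu$ from $u_0$ satisfies $\Lambda^i_{\Gu}=\Lambda^i_{G}$ and $\mu(pos^i_{\Gu})=pos^i_G$ for every $i$. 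Since the agent's next action (including the decision to halt) depends only on its memory and on the label of its current vertex, and since a covering preserves labels, the two executions read the same input at every step; hence $\A$ halts on $\Gu$ after exactly the same number $N$ of moves.

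The heart of the argument is then immediate. As $\A$ is an Exploration algorithm, on every binoculars-labelled graph it either never halts or halts after visiting every vertex. We have just shown that $\A$ does halt on $\Gu$, so it must have visited all of $V(\Gu)$ within its $N$ moves. But a walk of $N$ moves meets at most $N+1$ distinct vertices (the homebase, plus at most one fresh vertex per move), whence $|V(\Gu)|\le N+1$. In particular $\Gu$ is finite, i.e. $G\in\FNT$; and the same inequality forces the move complexity $N$ to be at least $|V(\Gu)|-1$, which yields the announced lower bound of order $|V(\Gu)|$.

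The argument is short once its ingredients are in place, and the single point that I would treat with care -- and regard as the main obstacle -- is the passage between the simplicial and the graph settings. The Lifting Lemma and the definition of Exploration are stated for graphs and graph coverings, whereas the universal cover is produced as a simplicial complex over $\X(G)$. I would close this gap by invoking the earlier equivalence between graph coverings (for the binoculars labelling) and simplicial coverings of clique complexes, and by checking that the universal cover of $\X(G)$, read through its $1$-skeleton, is itself a binoculars-labelled graph; this makes $\Gu$ a genuine graph to which both the Lifting Lemma and the definition of an Exploration algorithm legitimately apply. Everything else -- the transfer of the halting step and the vertex-counting bound -- is routine.
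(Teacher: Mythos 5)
Your overall strategy is the same as the paper's: lift the halting execution to the universal cover via the Lifting Lemma, then count vertices reachable in $N$ moves. The counting step and the reduction from simplicial coverings of $\X(G)$ to graph coverings of binoculars-labelled graphs are handled correctly. But there is a gap in the way you establish $G\in\FNT$. To conclude that $\A$ must have visited \emph{every} vertex of $\Gu$ once it halts there, you apply the specification of an Exploration algorithm directly to $\Gu$ --- which, in the case you need to rule out ($G\in\INF$), is an \emph{infinite} graph. The specification ``visits every vertex and terminates, or never halts'' is a correctness requirement on the (finite) networks the algorithm may be run on; an algorithm that halts prematurely on an infinite object is not, by itself, a violation of anything, and on an infinite graph the clause ``visits every vertex and terminates'' is vacuously unsatisfiable, so your reading would amount to smuggling in the extra axiom that every Exploration algorithm runs forever on every infinite graph. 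That is precisely the step that needs justification, and assuming it is close to assuming the conclusion.

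The paper closes this gap by truncation: letting $r$ be the number of steps of $\A$ on $G$, it forms the \emph{finite} graph $H=B_{\Un{G}}(\hat{v}_0,r+1)$. Since the first $r$ moves (and the binoculars labels read along the way) depend only on the ball of radius $r+1$ around the homebase, $\A$ behaves on $H$ exactly as on $\Gu$ and halts after $r$ steps; but $|V(H)|>r+1$ because $\Gu$ is infinite, so $\A$ halts on a legitimate finite input without having explored it --- a genuine violation of the definition. Your proof becomes correct if you insert this truncation step; the remainder (once $\Gu$ is known to be finite, lifting plus the bound of one new vertex per move gives $|V(\Gu)|\leq N+1$) matches the paper's second half, including the same off-by-one between the stated bound $|V(\Gu)|$ and what the counting actually delivers.
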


\begin{proof}
Suppose it is not the case and assume there exists an exploration
algorithm $\A$ that explores a graph $G \in \INF$ when it starts from
a vertex $v_0 \in V(G)$. Let $r$ be the number of steps performed by
$\A$ on $G$ when it starts on $v_0$. 

Consider the universal cover $\Un{G}$ of the complex $G$. Consider a
covering map $\varphi$ from $\Un{G}$ to $G$ and consider a vertex
$\hat{v}_0 \in V(\Un{G})$ such that $\varphi(\hat{v}_0) = v_0$. By
Lemma~\ref{lemma:lifting}, when executed on $\Un{G}$, $\A$ stops after
$r$ steps. Consider the graph $H = B_{\Un{G}}(\hat{v}_0,r+1)$. Since
$G \in \INF$, $\Un{G}$ is infinite and $|V(H)| > r + 1$. When
executed on $H$ starting in $\hat{v}_0$, $\A$ behaves as in $\Un{G}$
during at least $r$ steps since
the first moves can only depend of $B_{\Un{G}}(\hat{v},r)$ and consequently $\A$
stops after $r$ steps when executed on $H$ starting in
$\hat{v}_0$. Since $|V(H)| > r+1$, $\A$ stops before it has visited all
nodes of $H$ and thus $\A$ is not an Exploration algorithm, a
contradiction.
The move complexity bound is obtained from the Lifting Lemma applied
to the covering map $\varphi:\Gu\to G$.
Assume we have an Exploration algorithm \A halting on $G$ at some
step $q$. If $|V(\Gu)|>q+1$
then \A halts on $\Gu$ and has not visited all vertices of $\Gu$
since at most one vertex can be visited in a step (plus the
homebase). A contradiction.
\end{proof}

This is the same lifting technique that shows that, 
without binoculars, tree networks are the only explorable
networks without global knowledge.

\section{Exploration of $\FNT$}

We propose in this section an Exploration algorithm for the \FNT
family in order to prove that this family is the maximum set of
explorable networks.  

The goal of Algorithm \ref{algo:fnt} is to visit, in a BFS
fashion, a ball  centered on the homebase of the agent until the radius 
of the ball is sufficiently large to ensure that $G$ is explored. 
Once such a radius is reached, the agent stops.
To detect when the radius is sufficiently large, we use the view of the
homebase (more details below) to search for a simply connected graph
which locally looks like the explored ball.

The view of a vertex is a standard notion 
in anonymous networks \cite{YKsolvable,BVanonymous}.
The \emph{view} of a vertex $v$ in a labelled graph $(G,label)$ is a possibly
infinite tree composed by paths starting from $v$ in $G$.

From \cite{YKsolvable}, the view $\view_G(v)$ of a vertex $v$ in $G$
is the labelled rooted tree built recursively as follows.  The root of
$\view_G(v)$, denoted by $x_0$, corresponds to $(v,label(v))$.  For
every vertex $v_i$ adjacent to $v$, we add a node $(x_i,label(v_i))$
in $\view_G(v)$ and insert an edge between $x_0$ and $x_i$ labelled by
$\delta_v(v_i)$ and $\delta_{v_i}(v)$ on the extremities corresponding
to $x_0$ and $x_i$.  To finish the construction, every node $x_i$
adjacent to $x_0$ is identified with the root of the tree
$\view_G(v_i)$.

We denote by $\view_G(v,k)$, the view $\view_G(v)$ truncated at depth
$k$. If the context permits it, we denote it by $\view(v,k)$.

Note that, in our model, $label(v)$ is actually $\nu(v)$, the graph
that is obtained using binoculars from $v$. Given an integer
$k \in \N$, we define an equivalence relation on vertices using the
views truncated at depth $k$: $v \sim_k w$ if
$\view^k(v)=\view^k(w)$.

\subsection{Presentation of the Algorithm}
Let $G$ be a complex, $v_0\in V(G)$ be the homebase of the agent in $G$
and $k$ be an integer initialized to $1$. Algorithm~\ref{algo:fnt} is divided in
phases.
At the beginning of a phase, the agent explores the ball $B(v_0,\p k)$ of
radius $\p k$ in a BFS fashion.
With this exploration, the agent records paths of length at most $2k$
originating from $v_0$ and then computes the view
$\view(v_0,\p k)$ of $v_0$.

At the end of the phase, the agent backtracks to its homebase,
and enumerates the complexes of size less than $k$
until it finds one, denoted $H$, which has a vertex whose view
at distance $2k$ is equal to $\view(v_0,\p k)$. This is the end of the phase.

If such an $H$ exists and if all its simple cycles are
$k$-contractible then we halt the Exploration.  Otherwise, $k$ is
incremented and the agent starts another phase.

\begin{algorithm}[t] 
  \LinesNumbered
  \caption{$\FNT$-Exploration algorithm}
  \label{algo:fnt}
  \BlankLine
  \Begin{
    $k=0$\;
    \Repeat{$H$ is defined and has only $k$-contractible simple cycles\label{cond2}}{
      Increment $k$ \;
      Explore at distance $2k$ and compute $\view(v_0,2k)$\; 
      Find a complex $H$ such that $|V(H)|< k$ and $\exists \tild v_0\in V(H)$
      such that  $\tild v_0\sim_{2k}v_0$\;\label{cond1}
    }
    Stop the exploration\;
  }
\end{algorithm}

Deciding the $k$-contractibility  of a given cycle is computable (by
considering all possible sequences of elementary homotopies of length
at most $k$).
Since the total number of simple cycles of a graph is finite,
Algorithm~\ref{algo:fnt} can be implemented on a Turing machine.

\subsection{Correction of the algorithm}

In order to prove the correction of this algorithm, 
we prove that when the first complex $H$ satisfying 
every condition of Algorithm \ref{algo:fnt} is found, 
then 
$H$ is actually the universal cover
of $G$.
Intuitively, this works because it is not possible to find 
a simply connected complex that looks locally the same as a
\emph{strict subpart} of another complex.

In this section, we show that if the algorithm stops when executed on
a graph $G$ from a vertex $v_0$, then the graph $H$ computed by the
algorithm is a covering of $G$ (Corollary~\ref{cor-rev}). In order to
show this, we show that if we fix a vertex $\tv_0 \in V(H)$ such that
$\tv_0 \sim_{2k} v_0$, we can define unambiguously a mapping $\varphi$
from $V(H)$ to $V(G)$ as follows: for any $\tu \in V(H)$, let $p$ be
any path from $\tv_0$ to $\tu$ in $H$ and let $u = \varphi(\tu)$ be
the vertex reached from $v_0$ in $G$ by the path labelled by
$\lambda(p)$ (Proposition~\ref{prop-chemins}). Then we show that this
mapping is a covering.

Remember that given a path $p$ in a complex $G$, $\lambda(p)$ denotes
the sequence of (outgoing) port numbers followed by $p$ in $G$. We
denote by $\dest{G}{v_0}{p}$, the vertex in $G$ reached by the
path-labelling $\lambda(p)$ from $v_0$.

We prove below a technical lemma in order to prove Proposition
\ref{prop-chemins}.

\begin{lem}\label{lem-cas-de-base}
Consider a graph $G$ such that Algorithm~\ref{algo:fnt} stops on $G$
when it starts in $v_0$.  Let $k$ and $H$ be the integer and the graph
computed by the algorithm before it stops. Consider any vertex
$\tv_0 \in V(H)$ such that $v_0 \sim_{2k} \tv_0$.

For every path $\tp=(\tv_0,\tv_1,\ldots,
\tv_{|\tp|}=\tu)$ and every cycle $\tc =
(\tu=\tu_0, \ldots, $ $\tu_{|\tc|}=u)$ in $H$ such that
  $|\tp| + |\tc| + \area(\tc) \leq 2k$, there is a
  unique vertex $u \in V(G)$ such that
  $\destjc{G}{v_0}{\lambda(\tp)}= u =
  \destjc{G}{u}{\lambda(\tc)}$.
\end{lem}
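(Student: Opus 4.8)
The plan is to split the statement into an existence part and a closure part, uniqueness being free. Since a port-labelling followed from a fixed vertex of $G$ reaches at most one vertex, any $u$ with $\dest{G}{v_0}{\tp}=u$ is automatically unique, so it suffices to show that $\lambda(\tp)$ is followable from $v_0$ (which produces $u=\dest{G}{v_0}{\tp}$) and that $\dest{G}{u}{\tc}=u$, i.e. that the loop $\tc$ based at $\tu$ in $H$ lifts to a loop based at $u$ in $G$. Both facts will be extracted from the single structural hypothesis $v_0\sim_{2k}\tv_0$ together with the budget $|\tp|+|\tc|+\area(\tc)\le 2k$.

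First I would isolate the transport property of views: since $\view(v_0,2k)=\view(\tv_0,2k)$, following any labelling of length $\ell\le 2k$ from the two roots reaches vertices whose views agree up to depth $2k-\ell$. In particular $\lambda(\tp)$ is followable from $v_0$, the vertex $u$ exists, and $u\sim_{2k-|\tp|}\tu$; more generally every vertex met while following a prefix of length $\le 2k$ of the concatenation $\lambda(\tp)\lambda(\tc)$ carries in $G$ exactly the binoculars label $\nu$ read at the matching vertex of $H$, which in turn guarantees followability (the required ports exist). The closure $\dest{G}{u}{\tc}=u$ I would then prove by induction on $\area(\tc)$. In the base case $\area(\tc)=0$ the loop $\tc$ is null-homotopic using backtracking moves only; each removed backtracking subpath of $H$ corresponds in $G$ to leaving the current vertex by some port and immediately returning by the matching port, which comes back to the same vertex, so it does not move the lifted walk and hence $\dest{G}{u}{\tc}=u$.

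For the inductive step the hypothesis provides a minimal disk diagram $(D,f)$ of $\tc$ with $\area(D)=\area(\tc)>0$; peeling a boundary triangle yields a homotopic loop $\widetilde{c}'$ based at $\tu$, with disk diagram $D$ minus that triangle, so that $\area(\widetilde{c}')=\area(\tc)-1$ while $|\widetilde{c}'|\le|\tc|+1$. Thus $|\tp|+|\widetilde{c}'|+\area(\widetilde{c}')\le 2k$ still holds and the induction hypothesis applies to $\widetilde{c}'$. The triangle pushed across, say $\{\tu_i,\tu_{i+1},\tw\}$, is reached from $\tv_0$ by a path of length $\le|\tp|+|\tc|+\area(\tc)\le 2k$, so the vertex $x_i$ of $G$ onto which the lifted walk has arrived when it stands at $\tu_i$ carries the same binoculars label as $\tu_i$. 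Because that label records the edges among the neighbours of $\tu_i$ together with their port numbers, the triangle lifts to a genuine triangle of $G$ at $x_i$; consequently following the two sides $\lambda(\tu_i,\tu_{i+1})$ and $\lambda(\tu_i,\tw,\tu_{i+1})$ from $x_i$ reaches one and the same vertex. Replacing one side by the other therefore leaves the endpoint of the lifted walk unchanged, giving $\dest{G}{u}{\tc}=\textsc{dest}_G(u,\lambda(\widetilde{c}\,'))$, which equals $u$ by the induction hypothesis.

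The step I expect to be the crux is this transport of a $2$-cell of $H$ to a $2$-cell of $G$: it is the only place where the binoculars labelling is indispensable, since I need a triangle of $H$ to lift to a triangle of $G$ (equivalently a $2$-simplex of $\X(H)$ to a $2$-simplex of $\X(G)$), and not merely to three pairwise reachable vertices. Making this rigorous is exactly what forces the distance bookkeeping encoded by the budget: I would verify that every vertex occurring in $D$, and in each loop produced along the contraction, is reached from $\tv_0$ by a labelling of length at most $|\tp|+|\tc|+\area(\tc)\le 2k$, so that its binoculars label in $G$ is dictated by $\view(\tv_0,2k)=\view(v_0,2k)$ and coincides with the one in $H$. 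This is precisely why all three quantities $|\tp|$, $|\tc|$ and $\area(\tc)$ enter the bound, the $\area(\tc)$ term accounting for the apices $\tw$ introduced while peeling successive triangles.
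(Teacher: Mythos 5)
Your overall strategy is the paper's: prove existence/uniqueness of $u$ from $v_0\sim_{2k}\tv_0$, then prove $\destjc{G}{u}{\lambda(\tc)}=u$ by induction on $\area(\tc)$, transporting each elementary homotopy of $H$ into $G$ via the binoculars label of the pivot vertex, with the budget $|\tp|+|\tc|+\area(\tc)\le 2k$ guaranteeing that every pivot is reached by a labelling short enough for its label to be forced by the view equality. You also correctly identified why $\area(\tc)$ must appear in the budget. The gap is in the inductive step: the claim that \emph{peeling a boundary triangle of a minimal disk diagram $(D,f)$ yields a homotopic loop whose disk diagram is $D$ minus that triangle, with area $\area(\tc)-1$} is not true in general. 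If the third vertex $\dd w$ of the peeled boundary triangle is itself a boundary vertex of $D$ not adjacent along $\partial D$ to the peeled edge, then removing that face disconnects the triangulated disk along the vertex $\dd w$ (or along a chord): what remains is a wedge of two disks, its ``boundary'' is not a simple cycle, and the modified loop need not admit any disk diagram of smaller area in the sense used here. So the inequality $\area(\widetilde{c}\,')\le\area(\tc)-1$, on which your induction rests, is unjustified precisely in this configuration.

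This is exactly the configuration the paper spends most of its proof on. Its Case~1 (some vertex of the link of the basepoint lies on $\partial D$) does not try to produce one smaller loop; it splits $\tc$ into two loops $\tc_1,\tc_2$, each inheriting its own disk diagram of strictly smaller area and still satisfying the budget, and applies the induction hypothesis to each before concatenating. Its Case~2 removes the whole fan of $\ell-1$ triangles around the basepoint in one step, which keeps the boundary simple and lets the replacement path run inside the neighbourhood of $\tu$, where $\nu(u)=\nu(\tu)$ closes the argument. To repair your version you would need either this dichotomy, or a separate combinatorial lemma that a triangulated disk always contains a ``safely peelable'' boundary triangle (an ear, or one whose apex is an interior vertex) — a true but nontrivial fact that you neither state nor prove. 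The part you flagged as the crux (lifting a $2$-cell of $H$ to a $2$-cell of $G$ via the labels $\nu$, including the ports on edges between neighbours) is handled correctly and matches the paper; the missing piece is the disk-diagram surgery.
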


\begin{proof}
\label{proof:casdebase}
\makeatletter{}Note that $|V(H)| <k$, that $H$ is simply connected and that for every
simple cycle $\tc$ of $H$, $\area(\tc) \leq k$.  Since $v_0 \sim_{2k}
\tv_0$ and since $|\tp| \leq 2k$, there exists a unique $u \in V(G)$
such that $u = \destjc{G}{v_0}{\lambda(\tp)}$, and moreover $\nu(u) =
\nu(\tu)$. Let $m = |\tc|$; since $|\tp|+|\tc| \leq 2k$, for each $0
\leq i \leq m$, there exists $u_i$ such that
$\destjc{G}{u}{\lambda(\tu_0,\tu_1,\ldots,\tu_i)} = u_i$ and
$\destjc{G}{u_i}{\lambda(\tu_i,\tu_{i-1},\ldots,\tu_0)} = u_0$.

We prove the result by induction on $\area(\tc)$.  If $\area(\tc) =
0$, then either $|\tc| = (\tu)$ and there is nothing to prove, or $\tc
=(\tu=\tu_0,\tu_1,\tu_2=\tu)$ and $u_0 = u_{2}$ since $\nu(u) =
\nu(\tu)$. 

Suppose now that $\area(\tc) \geq 1$. Consider a minimal disk diagram
$(D,f)$ for $\tc$ and let $\dd{u} = \dd{u}_0, \dd{u}_1\ldots,
\dd{u}_{m} = \dd{u}$ be the vertices on $\partial D$ respectively
mapped to $\tu = \tu_0, \tu_1, \ldots, \tu_{m} =
\tu$. Since $D$ is a planar triangulation, there exists a path
$(\dd{u}_1=\dd{w}_1,\dd{w}_2,\ldots, \dd{w}_\ell=\dd{u}_{m-1})$ in the
neighbourhood of $\dd{u} = \dd{u}_0 = \dd{u}_m$.

We distinguish two cases, depending on whether one of the $\dd{w}_j$
is on the boundary of $\partial D$ or not.

\begin{figure}[h!]
  \centering
  \subfloat[Case 1: $\dd u_{i} = \dd w_j$]{
    \includegraphics[width=4cm]{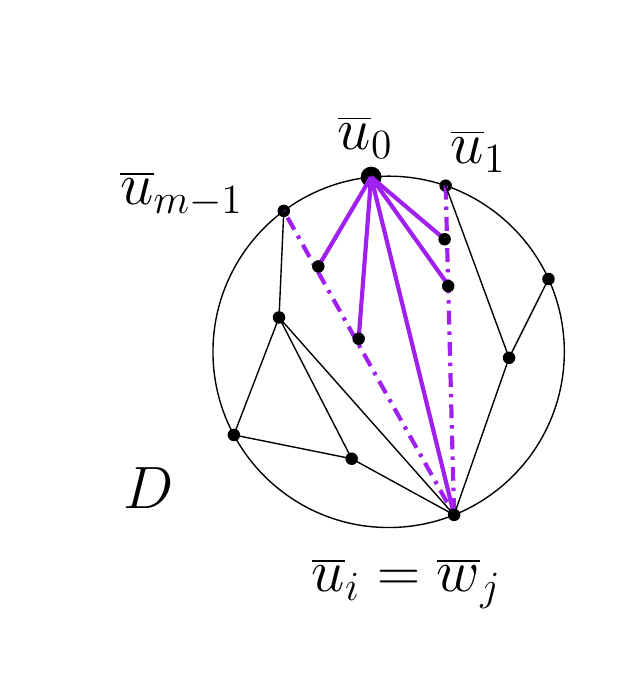}
  }
  \subfloat[Case 1.1: $f(\dd u)=f(\dd w)$]{
    \includegraphics[width=4cm]{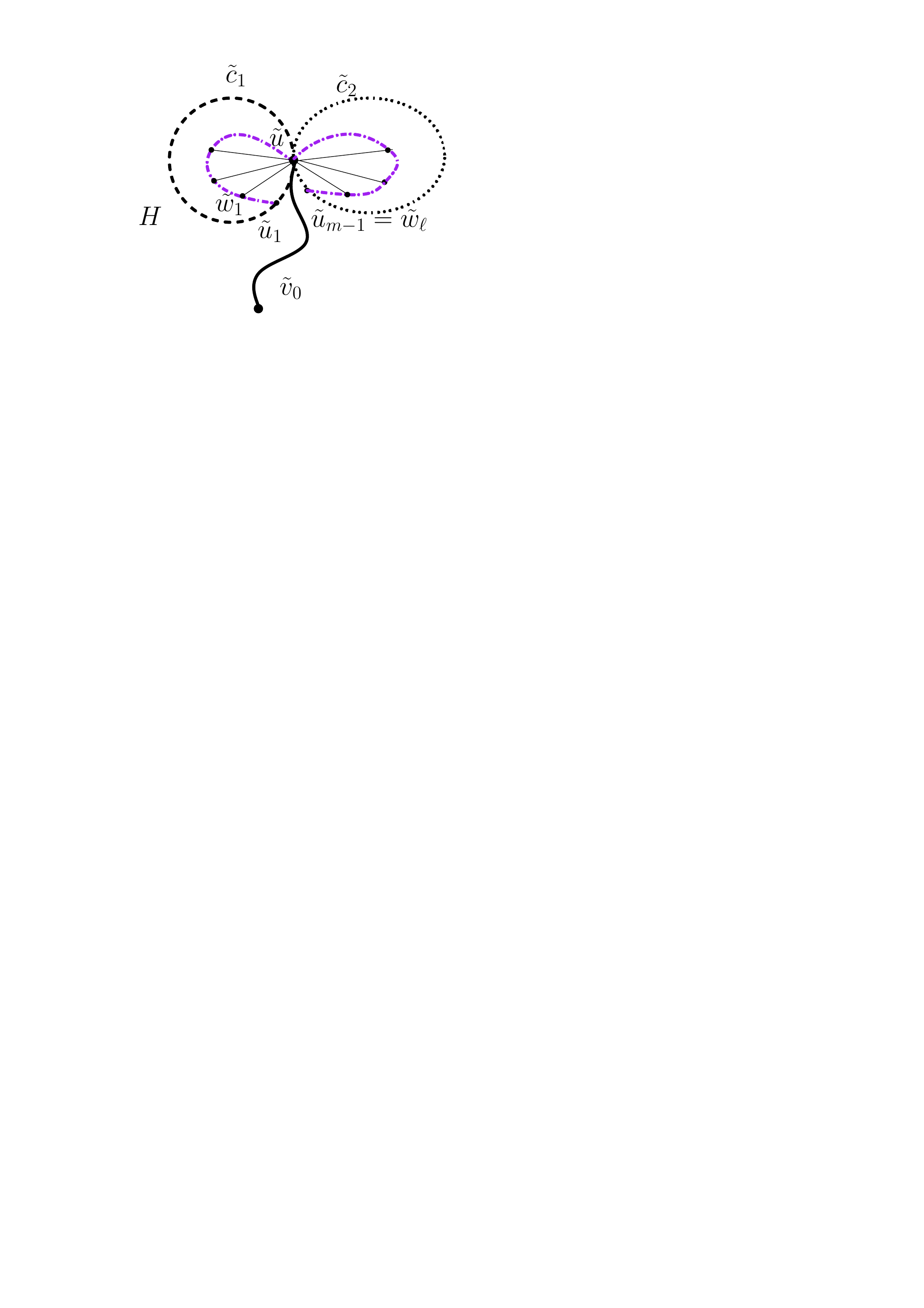}
  }
  \subfloat[Case 1.2: $f(\dd u) \neq f(\dd w)$]{
    \includegraphics[width=4cm]{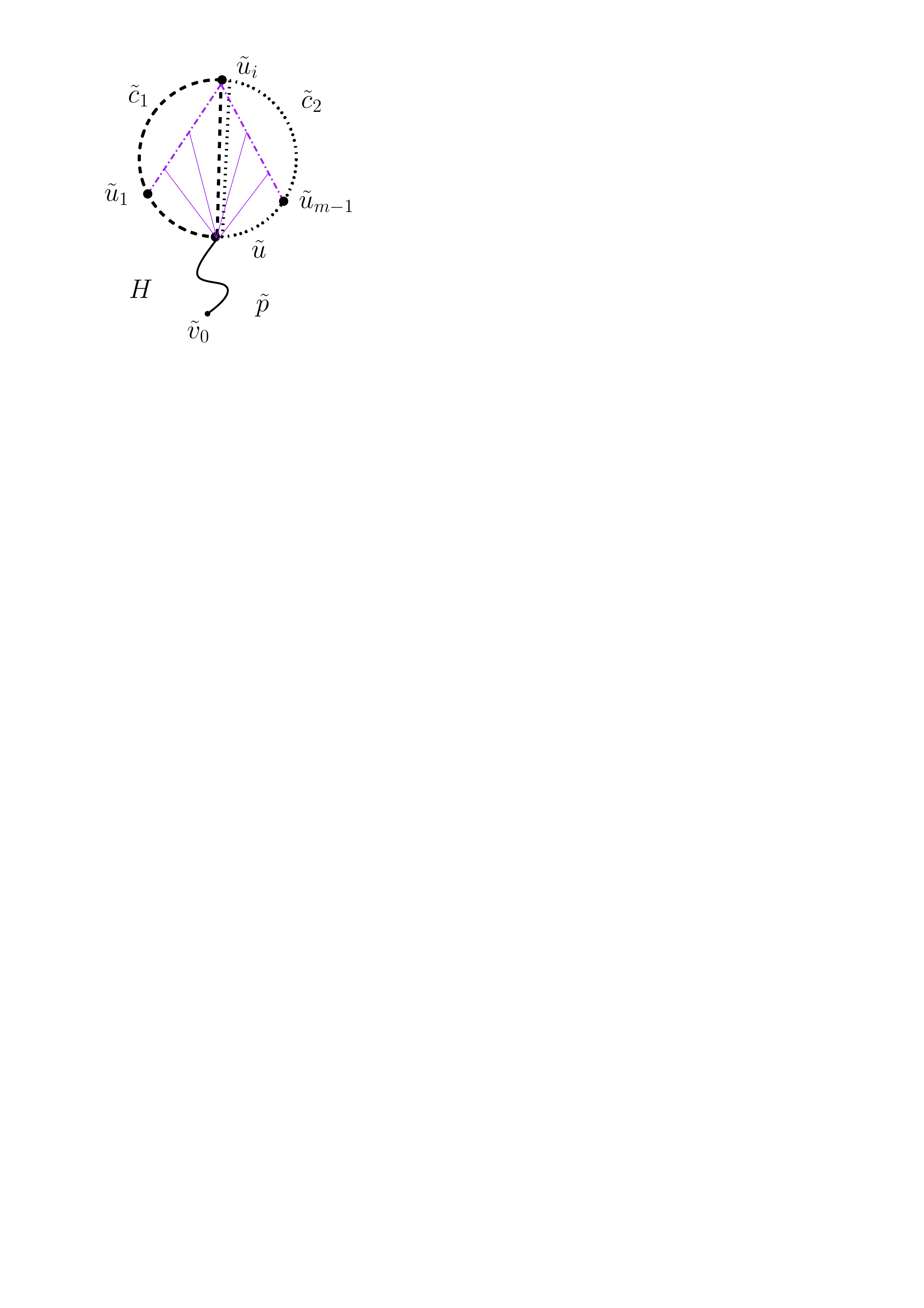}
  }
  \caption{To the proof of Lemma \ref{lem-cas-de-base}: Cas 1}
\end{figure} 

\smallskip
\noindent\textbf{Case 1.} there exists $2 \leq i \leq m-2$ and $2 \leq
j \leq \ell -1$ such that $\dd{u}_i = \dd{w}_j$.
\smallskip

Let $\dd{w} = \dd{w}_j = \dd{u}_i$ and let $\tw = f(\dd{w}) =
\tu_i$. Since $f$ is a simplicial map and since $\dd{u}\dd{w} \in
E(D)$, either $\tu = \tw$ or $\tu\tw \in
E(H)$. 

Suppose first that $\tu = \tw$ and let $\tc_1 =
(\tu=\tu_0,\tu_1,\ldots,
\tu_{i-1},\tw=\tu)$ and $\tc_2 =
(\tu=\tw,\tu_{i+1},\ldots,,\tu_m=\tu)$. For
the cycle $\tc_1$, one can construct a disk diagram $(D_1,f_1)$ from
$(D,f)$ by removing all vertices that do not lie inside the cycle
$(\dd{u}_0,\dd{u}_1,\ldots, \dd{u}_{i-1},\dd{w},\dd{u})$ and by
contracting the edge $\dd{u}\dd{w}$. Since the triangle
$\dd{u}_0\dd{w}_{\ell-1}\dd{w}_\ell$ does not appear in $D_1$,
$\area(\tc_1) \leq \area(D_1) < \area(D) = \area(\tc)$. Similarly, we have
that $\area(\tc_2) < \area(\tc)$. Moreover, note that $|\tp| +
|\tc_1| + \area(\tc_1) < |\tp| + |\tc| +
\area(\tc) \leq 2k$ and that for similar reasons, $|\tp| +
|\tc_2| + \area(\tc_2) \leq 2k$.  By induction assumption,
we have that $\destjc{G}{u}{\lambda(\tc_1)}=u$ and
$\destjc{G}{u}{\lambda(\tc_2)}=u$. Consequently,
$\destjc{G}{u}{\lambda(\tc)}=\destjc{G}{u}{\lambda(\tc_1)
  \cdot \lambda(\tc_2)} = \destjc{G}{u}{\lambda(\tc_2)} =
u$.

Suppose now that $\tu\tw\in E(H)$ and let $\tc_1 =
(\tu=\tu_0,\tu_1,\ldots, \tu_{i-1},\tw,\tu)$ and $\tc_2 =
(\tu,\tw,\tu_{i+1},\ldots,,\tu_m=\tu)$. As in the previous case, it is
easy to see that from $(D,f)$, one can construct two disk diagrams
$(D_1,f_1)$ and $(D_2,f_2)$ for $\tc_1$ and $\tc_2$ such that $\area(\tc_1)
< \area(\tc)$ and $\area(\tc_2) < \area(\tc)$. As before, $|\tp| + |\tc_1| +
\area(\tc_1) \leq 2k$ and $|\tp| + |\tc_2| + \area(\tc_2) \leq 2k$.
By induction assumption, we have that
$\destjc{G}{u}{\lambda(\tc_1)}=u$ and
$\destjc{G}{u}{\lambda(\tc_2)}=u$. Let $\tc' = \tc_1 \cdot \tc_2 =
(\tu=\tu_0,\tu_1,\ldots,\tu_{i-1},\tw,\tu,\tw,\tu_{i+1},\ldots,,\tu_m=\tu)$
and note that $\destjc{G}{u}{\lambda(\tc')} =
\destjc{G}{u}{\lambda(\tc)}$.  Consequently,
$\destjc{G}{u}{\lambda(\tc)}=\destjc{G}{u}{\lambda(\tc_1) \cdot
  \lambda(\tc_2)} =$ \linebreak $ \destjc{G}{u}{\lambda(\tc_2)} = u$ and we are
done.

\smallskip
\noindent\textbf{Case 2.} for all $1 \leq i \leq m-1$ and for all $2
\leq j \leq \ell -1$, $\dd{u}_i \neq \dd{w}_j$.
\smallskip

For each $1 \leq i \leq \ell$, let $\tw_i = f(\dd{w}_i)$.  First
suppose that $\ell=2$ and that $\tu_1=\tw_1 = \tw_2=\tu_{m-1}$. In
this case, consider the path $\tp'=\tp\cdot(u_1)$ and the cycle $\tc'
= (\tu_1, \tu_2, \ldots, \tu_{m-2})$. Note that $|\tp'| = |\tp|+1$ and
that $|\tc'|=|\tc|-2$.  From $(D,f)$, one can construct a disk diagram
for $\tc'$ by deleting the vertex $\dd{u}_0$ and by contracting the
edge $\tu_1\tu_{m-1}$. Consequently, $\area(\tc') < \area(\tc)$ and
$|\tp'|+|\tc'|+\area(\tc') < |\tp|+|\tc|+\area(\tc) \leq 2k$. By the
induction assumption,
$\destjc{G}{u_1}{\lambda(\tc')}=u_1$. Consequently,
$\destjc{G}{u}{\lambda(\tc)}=\destjc{G}{u_1}{\lambda(\tc')\cdot
  \lambda(\tu_{m-1},\tu)} = \destjc{G}{u_1}{\lambda(\tu_{1},\tu)} =
u$.

\begin{figure}[h!]
  \centering
  \subfloat{
    \includegraphics[width=3.5cm]{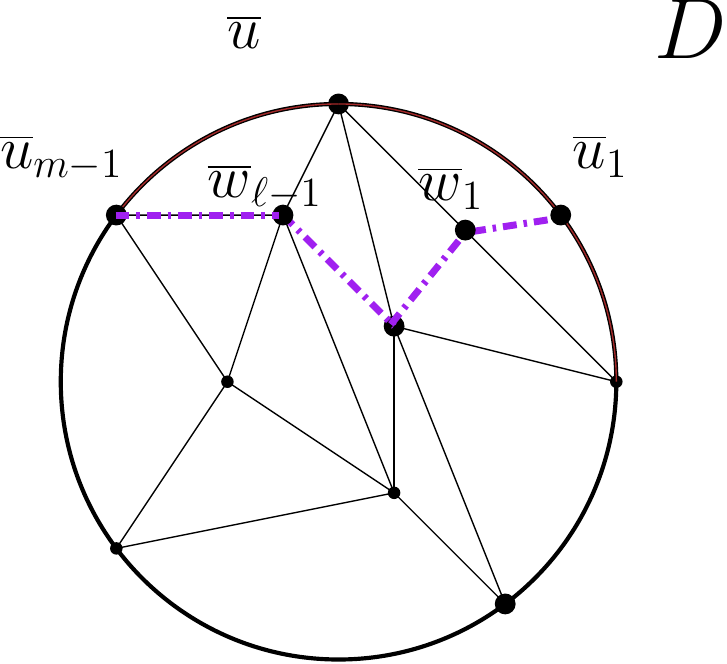}
  }
  \subfloat{
    \includegraphics[width=5cm]{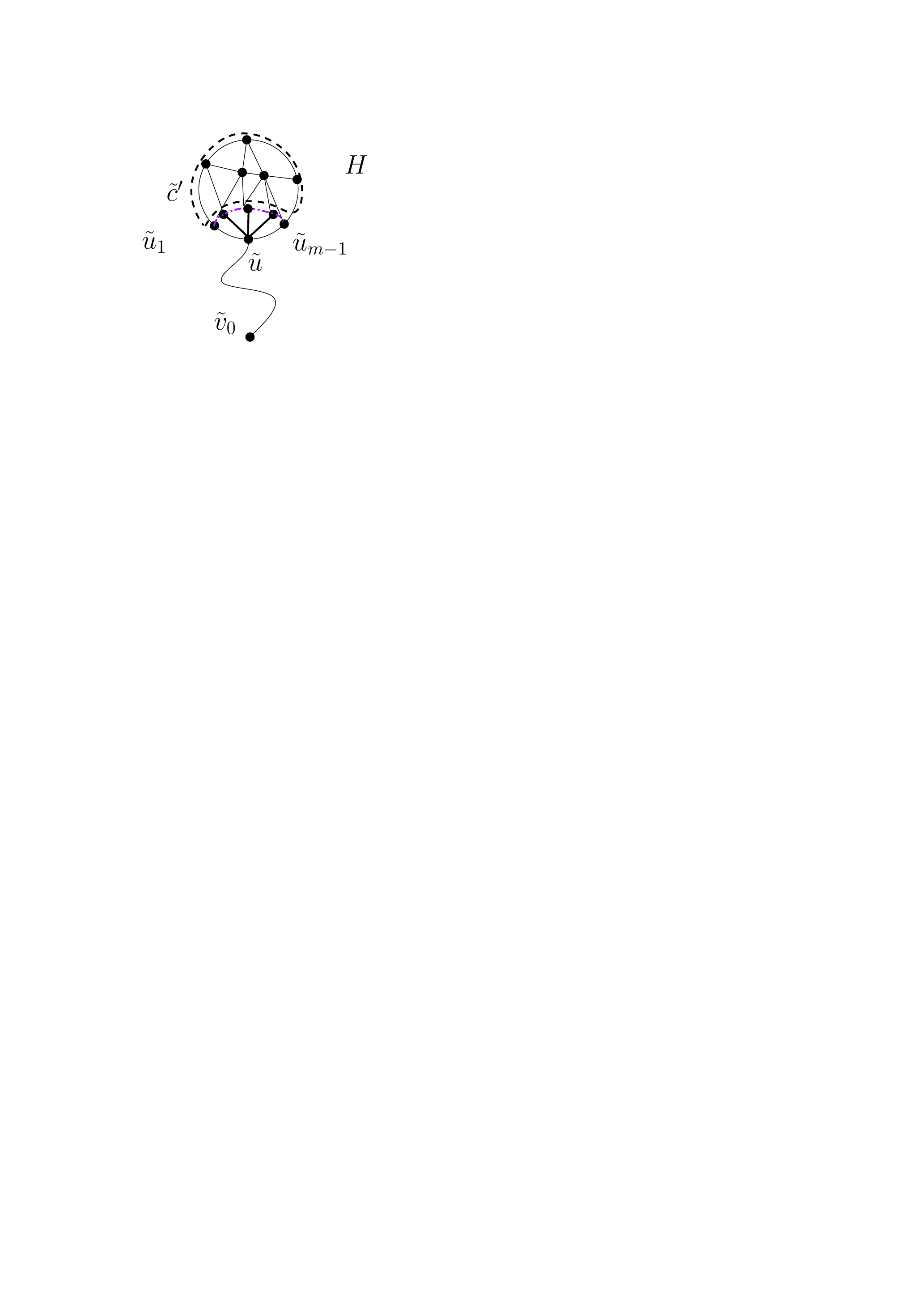}
  }
  \caption{To the proof of Lemma \ref{lem-cas-de-base}: Cas 2}
\end{figure} 
Suppose now that $\tw_1 \neq \tw_\ell$ or that $\ell > 2$.  Since
$(D,f)$ is a minimal disk diagram, we have that for each $1 \leq i
\leq \ell-1$, $\tw_i\neq \tw_{i+1}$ and $\tw_i\tw_{i+1} \in E(H)$ and
$\tu\tw_i \in E(H)$.

Let $\tp'=\tp\cdot(\tu_1)$, let $\tp_1 = (\tu_{1}, \tu_{2}, \ldots,
\tu_{m-1})$, let $\tp_2 =
(\tu_{m-1}=\tw_{\ell}, \tw_{\ell-1}, \ldots,$ $
\tw_{1}=\tu_{1})$ and let $\tc'=\tp_1 \cdot \tp_2$.  From $(D,f)$, one
can construct a disk diagram for $\tc'$ by deleting the vertex
$\dd{u}_0$, and consequently, $\area(\tc') < \area(\tc)$.  Moreover,
note that $|\tp'| = |\tp| + 1$, that $|\tc'| = |\tc| + \ell - 3$ and
$\area(\tc') \leq \area(\tc) - \ell +1$. Consequently, $|\tp'| +
|\tc'| + \area(\tc') \leq |\tp|+|\tc|+\area(\tc) -1 \leq 2k$.
Consequently, by induction,
$\destjc{G}{u_{1}}{\lambda(\tc')}=\destjc{G}{u_{m-1}}{\lambda(\tp_2)}=u_{1}$.

Let $\tp_2' = (\tu_{1}=\tw_{1}, \tw_{2}, \ldots,
\tw_{\ell}=\tu_{m-1})$ and note that
$\destjc{G}{u_{m-1}}{\lambda(\tp_2)\cdot\lambda(\tp_2')}=
u_{m-1}$. Consequently, $\destjc{G}{u}{\lambda(\tc)}=
\destjc{G}{u_1}{\lambda(\tp_1)\cdot\lambda(\tu_{m-1},\tu)}
=\destjc{G}{u_1}{\lambda(\tp_1)\cdot\lambda(\tp_2)
  \cdot\lambda(\tp_2')\cdot\lambda(\tu_{m-1},\tu)} =
\destjc{G}{u_1}{\lambda(\tp_2')\cdot\lambda(\tu_{m-1},\tu))}$.  Since
$v_0 \sim_{2k} \tv_0$ and since $|\tp|< 2k$,
$\nu(\tu)=\nu(u)$. Consequently, since $\tp_2' = (\tw_1, \ldots,
\tw_\ell)$ is a path lying in the neighbourhood of $\tu$,
$\destjc{G}{u_1}{\lambda(\tp_2')\cdot\lambda(u_{m-1},u)} =
\destjc{G}{u}{\lambda(\tu,\tu_1)\cdot\lambda(\tp_2')\cdot\lambda(\tu_{m-1},\tu)}
= u$ and consequently, $\destjc{G}{u}{\lambda(\tc)}= u$.

\end{proof}

\begin{prop}\label{prop-chemins}
Consider a graph $G$ such that Algorithm~\ref{algo:fnt} stops on $G$
when it starts in $v_0$.  Let $k\in\N$ and let $H$ be the graph
computed by the algorithm before it stops. Consider any vertex $\tv_0
\in V(H)$ such that $v_0 \sim_{2k} \tv_0$.

For any vertex $\tu \in V(H)$, for any two paths
$\tq,\tq'$ from $\tv_0$ to $\tu$ in $H$,
$\destjc{G}{v_0}{\lambda(\tq)} = \destjc{G}{v_0}{\lambda(\tq')}$.
\end{prop}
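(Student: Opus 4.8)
The plan is to reformulate the statement as a claim about loops developing to closed walks. Writing $u_1=\destjc{G}{v_0}{\lambda(\tq)}$ and $u_2=\destjc{G}{v_0}{\lambda(\tq')}$, showing $u_1=u_2$ for all pairs of paths from $\tv_0$ to $\tu$ is equivalent to showing that every loop based at $\tv_0$ develops to a closed walk in $G$, i.e. $\destjc{G}{v_0}{\lambda(\tc)}=v_0$. First I would record the structural facts forced by the stopping condition of Algorithm~\ref{algo:fnt}: $|V(H)|<k$, every simple cycle of $H$ is $k$-contractible, hence $\area(\tc)\le k$ for each simple cycle, and therefore, by Proposition~\ref{prop:cycle_to_loop}, $H$ is simply connected. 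Simple connectivity is what lets me reduce an arbitrary loop to a combination of simple cycles.

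The engine of the argument is Lemma~\ref{lem-cas-de-base}, whose reach is any pair (prefix $\tp$, loop $\tc$) with $|\tp|+|\tc|+\area(\tc)\le 2k$. In the bounded regime this settles everything at once: I would apply the lemma with $\tp=\tq$ and the loop $\tq^{-1}\cdot\tq'$ based at $\tu$, obtaining a unique $u$ with $\destjc{G}{v_0}{\lambda(\tq)}=u=\destjc{G}{u}{\lambda(\tq^{-1}\cdot\tq')}$; since $v_0\sim_{2k}\tv_0$ forces $\nu$ to agree with $\tu$ all along a development of length at most $2k$, each developed step is reversible (its outgoing and incoming ports are dictated by the matching binocular label), so developing $\tq^{-1}$ from $u$ returns to $v_0$ and hence $u_2=u=u_1$. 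The \emph{base case} of the whole scheme is the instance with \emph{empty} prefix: a simple cycle $\tc$ through $\tv_0$ has $|\tc|<|V(H)|<k$ and $\area(\tc)\le k$, so $|\tc|+\area(\tc)\le 2k$, and Lemma~\ref{lem-cas-de-base} gives $\destjc{G}{v_0}{\lambda(\tc)}=v_0$. The depth $2k$ at which views are compared is chosen precisely so that a simple cycle together with a minimal disk diagram fits inside this budget.

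The remaining work is to extend from simple cycles to arbitrary (long) loops, which I would do by induction on the length. A non-simple loop contains a simple sub-cycle $\sigma$ based at some vertex $\tw$ (take the first repeated vertex, so the prefix up to $\tw$ and $\sigma$ are both simple); excising $\sigma$ yields a strictly shorter loop, and the development is unchanged provided $\destjc{G}{\varphi(\tw)}{\lambda(\sigma)}=\varphi(\tw)$, again an instance of Lemma~\ref{lem-cas-de-base}. The hard part will be that $\tw$ may lie far from $\tv_0$, so neither $v_0\sim_{2k}\tv_0$ nor the $2k$ budget directly certifies $\sigma$ once it is carried by a long prefix. I would resolve this by a nested induction on the radius $r=d_H(\tv_0,\tw)$: once $\varphi$ is known to be well defined and locally bijective on $B_H(\tv_0,r)$, the developed images inherit the same binocular labels $\nu$ and the same truncated views as their preimages, so the view-equivalence \emph{re-centers} at $\varphi(\tw)$ and the lemma applies with $\tw$ as the new base point. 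Keeping this radius/area bookkeeping consistent — certifying that developing a sub-cycle located far from the homebase is legitimate — is the delicate point, and it is exactly what forces comparing views up to depth $2k$ rather than $k$.
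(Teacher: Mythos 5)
Your skeleton is the right one --- Lemma~\ref{lem-cas-de-base} as the engine, the budget accounting $|\tp|+|\tc|+\area(\tc)\le 2k$ obtained from simplicity ($|\tp|+|\tc|\le |V(H)|\le k$ plus $k$-contractibility of simple cycles), and excision of sub-cycles to reduce long walks --- and this is indeed close in spirit to the paper's argument. But the step you yourself flag as ``the delicate point'' is a genuine gap as written. Your nested induction on $r=d_H(\tv_0,\tw)$ claims that the view-equivalence ``re-centers'' at $\varphi(\tw)$ so that Lemma~\ref{lem-cas-de-base} can be re-applied with $\tw$ as base point. This is not available: from $v_0\sim_{2k}\tv_0$ you only get $\varphi(\tw)\sim_{2k-r}\tw$, not $\varphi(\tw)\sim_{2k}\tw$, so the lemma's hypothesis fails at the new base point; and asserting that $\varphi$ is ``well defined and locally bijective on $B_H(\tv_0,r)$'' before path-independence is established is circular --- well-definedness of $\varphi$ is exactly the statement being proved. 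Worse, the quantity you actually need for the excision is that the sub-cycle $\sigma$ develops closed \emph{from the vertex reached by the actual (possibly long) prefix of the loop}, and without path-independence you cannot replace that prefix by a short one.

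The fix --- and what the paper does --- is to arrange matters so that no re-centering is ever needed: the prefix fed to Lemma~\ref{lem-cas-de-base} is always kept \emph{simple}, hence of length $<k$, so the lemma always applies at the original base $\tv_0$ within the $2k$ budget. You half-state this yourself (``take the first repeated vertex, so the prefix up to $\tw$ and $\sigma$ are both simple''): if $\tw$ is the \emph{first} repetition then all vertices of the prefix and of $\sigma$ are distinct, so $|\tp|+|\sigma|\le k$ and $\area(\sigma)\le k$, and the worry that ``$\tw$ may lie far from $\tv_0$'' evaporates --- the detour through radius induction is unnecessary. Concretely, the paper runs a minimal-counterexample argument on pairs $(\tq,\tq')$ (minimizing total length, then the distance to the longest common prefix): first both paths are made simple by excising the first simple sub-cycle exactly as above, then the first divergence/reconvergence of two simple paths yields a simple cycle preceded by a simple common prefix, again within budget, and swapping prefixes contradicts minimality. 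There is also a secondary issue in your reduction from paths to loops: concluding $u_1=u_2$ from $\destjc{G}{v_0}{\lambda(\tq\cdot\tq'^{-1})}=v_0$ needs reversibility of the development of $\tq'$, which requires $\nu$-labels to match along it and hence only holds once $\tq'$ is known to be short (e.g.\ simple); the paper sidesteps this by working with destinations of paths throughout rather than with closed loops.
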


\begin{proof}
\label{proof:casgeneral}
\makeatletter{}  Suppose the result is not true and consider the set $\cP_0$ of all
  couples of paths $(\tq,\tq')$ that are counterexamples to the result
  such that $|\tq| \leq |\tq'|$. Among all couples of paths in
  $\cP_0$, let $\cP_1$ be the set of all couples of paths such that
  $|\tq|+|\tq'|$ is of minimum length. Among all couples of paths in
  $\cP_1$, let $\cP_2$ be the set of all couples of paths that
  minimizes $|\tq|-\ell$ where $\ell$ is the length of the longest common
  prefix of $\tq$ and $\tq'$.

  Let $(\tq,\tq') \in \cP_2$ and let $\tq = (\tv_0=\tu_0,
  \tu_1,\ldots, \tu_m=\tu)$ and $\tq' = (\tv_0=\tu_0', \tu_1',\ldots,
  \tu'_{m'}=\tu)$.

  Suppose first that the path $\tq$ is not simple, i.e., there exists
  $i < j$ such that $\tu_i=\tu_j$. Choose $i$ and $j$ such that $j$ is
  minimum. Consequently, for all $0 \leq i_1 < i_2 \leq j-1$,
  $\tu_{i_1}\neq \tu_{i_2}$. Consider the path $\tp = (\tu_0, \tu_1,
  \ldots, \tu_i)$ and the cycle $\tc = (\tu_i, \tu_{i+1}, \ldots,
  \tu_{j}= \tu_i)$. Note that since all vertices of $\tp$ and $\tc$
  are distinct and since $|V(H)| \leq k$, $|\tp| + |\tc| \leq
  k$. Since all simple cycles of $H$ are $k$-contractible, $\area(\tc)
  \leq k$. Consequently, by Lemma~\ref{lem-cas-de-base},
  $\destjc{G}{v_0}{\lambda(\tu_0,\ldots,\tu_i)} =
  \destjc{G}{v_0}{\lambda(\tu_0,\ldots,\tu_j)}$. Let $\tq_1 =
  (\tu_0,\ldots,\tu_{i-1},\tu_{i}= \tu_j, \tu_{j+1},\ldots,\tu_m)$ and
  note that from what we just prove, we have that
  $\destjc{G}{v_0}{\lambda(\tq)} =
  \destjc{G}{v_0}{\lambda(\tq_1)}$. Since $\tq_1$ is a path from
  $\tv_{0}$ to $\tu$ and since $|\tq_1| < |\tq|$,   from the definition of $\cP_1$, $\destjc{G}{v_0}{\lambda(\tq_1)} =
  \destjc{G}{v_0}{\lambda(\tq')}$ and consequently,
  $\destjc{G}{v_0}{\lambda(\tq)} = \destjc{G}{v_0}{\lambda(\tq')}$ and
  thus, $\tq$ is a simple path.  Using the same arguments, we can show
  that $\tq'$ is also a simple path.

  Let $\ell$ be the length of the longest common prefix of $\tq$ and
  $\tq'$. Note that if $\ell = |\tq|$, then $\tq = \tq'$ and there is
  nothing to prove. Suppose that $\ell < |\tq|$ and note that for all
  $i \leq \ell$, $\tu_i = \tu'_i$.  Consider the smallest index $i >
  \ell$ such that $\tu_i$ appears in $\tq'$. Let $j$ be the smallest
  index such that $\tu_i = \tu'_j$. Note that since $\tq$ and $\tq'$
  are simple, $j > \ell$ and all vertices $u_1, \ldots, u_\ell$,
  $u_{\ell+1}, \ldots, u_i$, and $u'_{\ell+1}, \ldots, u'_{j-1}$ are
  distinct.

  Let $\tq_1 = (\tu_0,\tu_1, \ldots, \tu_i)$ and let $\tq'_1 =
  (\tu'_0,\tu'_1, \ldots, \tu'_j)$ Consider the simple path $\tp =
  (\tu_0,\tu_1,\ldots,\tu_\ell)$ and the simple cycle $\tc = (\tu_\ell,
  \tu_{\ell+1}, \ldots, \tu_i=\tu'_j, \tu'_{j-1}, \ldots, \tu'_\ell)$. Since
  all vertices from $\tp$ and $\tc$ are distinct, $|\tp| + |\tc|\leq
  k$, and since all simple cycles of $H$ are $k$-contractible,
  $\area(\tc) \leq k$. Consequently, by Lemma~\ref{lem-cas-de-base},
  there exists a unique vertex $w \in V(G)$ such that $w =
  \destjc{G}{v_0}{\lambda(\tq_1)} = \destjc{G}{v_0}{\lambda(\tq_1')}$.

  Let $\tq_2 = (\tu_i,\tu_{i+1},\ldots,\tu_m)$ and $\tq'_2 =
  (\tu'_j,\tu'_{j+1},\ldots,\tu'_{m'})$. Note that
  $\destjc{G}{v_0}{\lambda(\tq)} = \destjc{G}{w}{\lambda(\tq_2)}$ and
  that $\destjc{G}{v_0}{\lambda(\tq')} =
  \destjc{G}{w}{\lambda(\tq'_2)}$.  If $j < i$, let $\tq'_3 =
  \tq'=\tq'_1\cdot\tq'_2$ and let $\tq_3= \tq'_1\cdot\tq_2$.  If $i
  \leq j$, let $\tq_3 = \tq =\tq_1\cdot\tq_2$ and let $\tq'_3 =
  \tq_1\cdot\tq'_2$. Note that if $j < i$ or if $i < j$, then $|\tq_3|
  + |\tq_3'| < |\tq| + |\tq'|$. If $i = j$, $|\tq_3| + |\tq'_3|= |\tq|
  + |\tq'|$, and the length of the common prefix of $\tq_3$ and
  $\tq'_3$ is $i > \ell$.  Consequently, in any case, from our choice
  of $\tq$ and $\tq'$, we know that $\destjc{G}{v_0}{\lambda(\tq_3)} =
  \destjc{G}{v_0}{\lambda(\tq_3')}$. Moreover, since $w =
  \destjc{G}{v_0}{\lambda(\tq_1)} = \destjc{G}{v_0}{\lambda(\tq_1')}$,
  we have that $\destjc{G}{w}{\lambda(\tq_2)} =
  \destjc{G}{v_0}{\lambda(\tq_3)} = \destjc{G}{v_0}{\lambda(\tq'_3)} =
  \destjc{G}{w}{\lambda(\tq'_2)}$. Consequently,
  $\destjc{G}{v_0}{\lambda(\tq)} = \destjc{G}{w}{\lambda(\tq_2)} =
  \destjc{G}{w}{\lambda(\tq'_2)} = \destjc{G}{v_0}{\lambda(\tq')}$,
  contradicting our assumption on $\tq, \tq'$.
 
\end{proof}
From Proposition \ref{prop-chemins} above we can define an
homomorphism between $G$ and the graph $H$ computed during the execution.
We prove below that, in fact, the homomorphism is a simplicial covering. 
\begin{cor}\label{cor-rev}
Consider a graph $G$ such that Algorithm~\ref{algo:fnt} stops on $G$
when it starts in $v_0 \in V(G)$ and let $H$ be the graph computed by
the algorithm before it stops.  The complex $H$ is the universal cover
of $G$.
\end{cor}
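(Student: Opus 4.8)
The plan is to realise explicitly the map $\varphi\colon V(H)\to V(G)$ described just before the statement and to prove that it is a simplicial covering; since $H$ will turn out to be simply connected, this identifies $H$ with $\Gu$. First I would fix $\tv_0\in V(H)$ with $v_0\sim_{2k}\tv_0$. Because the algorithm guarantees $|V(H)|<k$, every $\tu\in V(H)$ is joined to $\tv_0$ by a simple path $p$ with $|p|\le |V(H)|-1<k<2k$. I would then set $\varphi(\tu)=\dest{G}{v_0}{p}$: this destination exists because $v_0\sim_{2k}\tv_0$ and $|p|<2k$, and its value does not depend on the chosen path by Proposition~\ref{prop-chemins}, so $\varphi$ is well defined.

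Next I would verify the hypotheses of Proposition~\ref{covering}. Since $|p|<2k$ and $v_0\sim_{2k}\tv_0$, following $\lambda(p)$ from $v_0$ reaches a vertex carrying the same label as $\tu$, i.e.\ $\nu(\varphi(\tu))=\nu(\tu)$ (this is exactly the equality already extracted in the proof of Lemma~\ref{lem-cas-de-base}); in particular $\tu$ and $\varphi(\tu)$ have the same degree. For a neighbour $\tw\in N_H(\tu)$ I extend $p$ by the edge $\tu\tw$: the equality $\nu(\tu)=\nu(\varphi(\tu))$ shows that $\delta_{\tu}(\tw)$ is a valid port at $\varphi(\tu)$, that the edge it follows lands on $\varphi(\tw)$ (again by the path independence of Proposition~\ref{prop-chemins}), and that $\delta_{\tu}(\tw)=\delta_{\varphi(\tu)}(\varphi(\tw))$. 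Thus $\varphi$ is a homomorphism that restricts, on each neighbourhood, to a label- and port-preserving bijection onto the neighbourhood of its image, so Proposition~\ref{covering} yields that $\varphi$ is a covering from $H$ to $G$, equivalently a simplicial covering $\X(H)\to\X(G)$.

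Finally I would identify this covering with the universal one. The stopping condition of Algorithm~\ref{algo:fnt} forces every simple cycle of $H$ to be $k$-contractible, hence contractible, so by Proposition~\ref{prop:cycle_to_loop} the complex $H$ is simply connected and, by \cite{Lyn77}, isomorphic to its own universal cover $\Un{H}\cong H$. Applying the defining property of the universal cover of $G$ to the covering $\varphi$ yields a covering $\gamma\colon\Gu\to H$ with $\varphi\circ\gamma=\mu$. Since the universal cover is the unique simply connected covering of $G$ up to isomorphism, and $H$ is such a covering, I conclude $H\cong\Gu$, i.e.\ $H$ is the universal cover of $G$.

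The manipulations of the second paragraph are routine once Lemma~\ref{lem-cas-de-base} is in place. The one point I expect to require care is the last deduction, namely that a simply connected covering of $G$ must be \emph{the} universal cover, i.e.\ that $\gamma$ is an isomorphism, all the more since $\Gu$ is only known a priori to be possibly infinite. I would argue this purely from the universal property rather than by counting sheets: using $\Un{H}\cong H$ one gets a covering $H\to\Gu$ that is a one-sided inverse of $\gamma$, and symmetrically, from $\Un{\Gu}\cong\Gu$, a one-sided inverse of that map; chaining the two universal properties forces $\gamma$ to be invertible, which avoids any finiteness assumption (available here through Proposition~\ref{prop:NC}, but cleaner to bypass).
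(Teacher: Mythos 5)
Your proposal follows essentially the same route as the paper's proof: you define $\varphi(\tu)$ as the destination in $G$ of the label sequence of any path from $\tv_0$ to $\tu$, invoke Proposition~\ref{prop-chemins} for well-definedness, use $\nu(\varphi(\tu))=\nu(\tu)$ to verify the covering conditions neighbourhood by neighbourhood, and conclude from the simple connectivity of $H$ (forced by the stopping condition) that $H$ is the universal cover. Your final paragraph merely makes explicit, via the universal property, the standard fact that a simply connected covering is \emph{the} universal cover, a step the paper's proof leaves implicit; this is a correct and welcome addition rather than a divergence.
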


\begin{proof}
  By the definition of Algorithm~\ref{algo:fnt}, the complex $H$ is
  simply connected. Consequently, we just have to show that $H$ is a
  covering of $G$.
  
  Consider any vertex $\tv_0 \in V(H)$ such that $v_0 \sim_{2k}
  \tv_0$. For any vertex $\tu \in V(H)$, consider any path $\tp_{\tu}$
  from $\tv_0$ to $\tu$ and let $\varphi(\tu)
  =\destjc{G}{v_0}{\lambda(\tp_{\tu})}$. From
  Proposition~\ref{prop-chemins}, $\varphi(\tu)$ is independent from
  our choice of $\tp_{\tu}$. Since $v_0 \sim_{2k} \tv_0$ and since
  $|V(H)| \leq k$, for any $\tu \in V(H)$, $\nu(\varphi(\tu)) =
  \nu(\tu)$. Consequently, for any $\tu \in V(H)$ and for any neighbour
  $\tw \in N_H(\tu)$, there exists a unique $w \in N_G(\varphi(\tu))$
  such that $\lambda(\tu,\tw)=\lambda(\varphi(\tu),w)$. Conversely, for
  any $w\in N_G(\varphi(\tu))$, there exists a unique $\tw \in
  N_H(\tu)$ such that $\lambda(\tu,\tw)=\lambda(\varphi(u),w)$. In
  both cases, since $\tp_\tw=\tp_{\tu}\cdot(\tu,\tw)$ is a path from
  $\tv_0$ to $\tw$, by Proposition~\ref{prop-chemins}, $\varphi(\tw) =
  \destjc{G}{v_0}{\lambda(\tp_\tw)}= \destjc{G}{u}{\lambda(\tu,\tw)}=
  w$. Consequently, $\varphi$ is a covering from $H$ to $G$ that
  preserves the binoculars labelling. Therefore, the complex $H$ is a
  covering of the complex $G$.
\end{proof}

To finish to prove that Algorithm \ref{algo:fnt} is an $\FNT$
Exploration algorithm, we remark 
that, with connected graphs, coverings are always
surjective, therefore $G$ has been explored when the algorithm stops.

\begin{thm}\label{thm:explo_algo}
  Algorithm \ref{algo:fnt} is an Exploration algorithm
  for $\FNT$.
\end{thm}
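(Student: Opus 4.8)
The plan is to establish the two halves of the definition of an Exploration algorithm for $\FNT$ separately: first, that whenever Algorithm~\ref{algo:fnt} halts on an arbitrary graph $G$ it has in fact visited every vertex (and, moreover, that it halts only when this is possible), and second, that it does halt on every $G\in\FNT$. The whole of the first half will rest on Corollary~\ref{cor-rev}, which I may assume.

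For the correctness half, I would suppose the algorithm stops on $G$ from $v_0$ at the phase with parameter $k$, producing the complex $H$. By Corollary~\ref{cor-rev}, $H$ is the universal cover $\Gu$ of $G$, and there is a covering $\varphi\colon H\to G$. Since $G$ is connected, $\varphi$ is surjective, whence $|V(G)|\le |V(H)|<k$. A connected graph on fewer than $k$ vertices has diameter at most $k-2<2k$, so $B_G(v_0,2k)=V(G)$; as the final phase has just explored $B_G(v_0,2k)$ in BFS fashion, every vertex of $G$ has been visited. The same reasoning shows that halting forces $\Gu$ to be finite, i.e. $G\in\FNT$; equivalently, if $G\in\INF$ then the algorithm never halts. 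This is exactly the Exploration-algorithm property: on every graph the agent either visits all vertices and stops, or never stops.

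For termination, I would fix $G\in\FNT$ and let $\Gu$ be its universal cover, which is finite, say $n=|V(\Gu)|$. Being isomorphic to its own universal cover, $\Gu$ is simply connected by the proposition of \cite{Lyn77} quoted above, so all its simple cycles are contractible. Fix a covering $\mu\colon\Gu\to G$ and a vertex $\hat v_0$ with $\mu(\hat v_0)=v_0$; since coverings preserve views \cite{YKsolvable,BVanonymous}, we have $\hat v_0\sim_{2k}v_0$ for every $k$. As $\Gu$ is finite it has finitely many simple cycles; set $A=\max_{c}(\area(c)+|c|)$ over all simple cycles $c$ of $\Gu$. I claim that for every $k\ge\max(n+1,A)$ the complex $\Gu$ meets all of the algorithm's stopping conditions at phase $k$: indeed $|V(\Gu)|=n<k$, the vertex $\hat v_0$ satisfies $\hat v_0\sim_{2k}v_0$, and for each simple cycle $c$ one has $\area(c)+|c|\le A\le k$, so by the contractibility estimate recalled before Proposition~\ref{prop:SC-iff-DD}, $c$ is $(\area(c)+|c|)$-contractible and hence $k$-contractible. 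Therefore an admissible witness exists, and the algorithm halts by the phase with this value of $k$.

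The delicate step, and the one I expect to be the main obstacle, is termination for graphs in $\FNT\setminus\SC$. Here $G$ itself is a smaller complex that also matches the view at $v_0$ (take $\tv_0=v_0$), yet, not being simply connected, it possesses a simple cycle that is $k$-contractible for no $k$ at all; one must make sure such a candidate cannot block the loop forever. The resolution is that the stopping test demands a witness all of whose simple cycles are $k$-contractible, so $G$ is simply rejected at every phase, while the argument above guarantees that the admissible witness $\Gu$ becomes available once $k$ is large enough; together with Corollary~\ref{cor-rev}, which forces any admissible witness to be the universal cover, this shows the search succeeds precisely for large $k$. Combining the two halves yields that Algorithm~\ref{algo:fnt} is an Exploration algorithm for $\FNT$.
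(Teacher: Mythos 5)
Your proposal is correct and follows essentially the same route as the paper: correctness of halting via Corollary~\ref{cor-rev} (the witness $H$ is a covering, hence surjective, so $|V(G)|\le|V(H)|<k$ and the explored ball of radius $2k$ covers $G$), and termination on $\FNT$ by observing that the finite, simply connected universal cover $\Gu$ satisfies all of the algorithm's conditions once $k$ exceeds a bound derived from $|V(\Gu)|$ and the contractibility of its finitely many simple cycles. Your explicit remark that a non-simply-connected candidate such as $G$ itself is rejected by the all-simple-cycles-contractible test, rather than blocking the loop, is a point the paper leaves implicit but does not change the argument.
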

\begin{proof}
  From Corollary~\ref{cor-rev}, we know that if
  Algorithm \ref{algo:fnt} stops, then the graph $H$ computed by the
  algorithm is a covering of $G$. Moreover, since $|V(G)| \leq
  |V(H)| \leq k$ and since the agent has constructed $\view_G(v,k)$,
  it has visited all vertices of $G$.

  In order to show the theorem, we just have to prove that
  Algorithm \ref{algo:fnt} always halts on any graph
  $G \in \FNT$. Since $G \in \FNT$, its universal cover $\Gu$ is
  finite, its number of simple cycles is finite and there exists
  $q\in\N$ such that every simple cycle of $\Gu$ is $q-$contractible.
  Without loss of generality, assume that $|V(\Gu)| \leq
  q$. Consequently, if when starting on $v_0$, the algorithm does not
  halt, there exists an iteration of the main loop of
  Algorithm~\ref{algo:fnt} such that $k \geq q$. Since $\Gu$ is the
  universal cover of $G$, there exists $\tv_0 \in V(\Gu)$ such that
  $\view_G(v_0) = \view_\Gu(\tv_0)$. Consequently, $\Gu$ is a complex
  such that $\view_G(v_0,k) = \view_\Gu(\tv_0,k)$, $|V(\Gu)| \leq k$,
  and all its simple cycles are $k$-contractible. Therefore,
  Algorithm~\ref{algo:fnt} halts, a contradiction.
\end{proof}

From Proposition~\ref{prop:NC} and Theorem~\ref{thm:explo_algo} above, we
get the immediate corollary
\begin{corollary}
 The family $\FNT$ is the maximum set of Explorable networks.
\end{corollary}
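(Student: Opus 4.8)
The plan is to obtain the statement as an immediate combination of the two preceding results, establishing the two inclusions between \FNT and the maximum set of explorable networks. Recall that a \emph{maximum} explorable set is guaranteed to exist by the monotonicity argument given earlier: if $\gfam_1$ and $\gfam_2$ are explorable, then so is $\gfam_1\cup\gfam_2$, via the parallel composition of the two Exploration algorithms. So it suffices to prove a double inclusion.

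First I would show that the maximum explorable set is contained in \FNT. This is exactly the content of Proposition~\ref{prop:NC}: any explorable graph $G$ has a finite universal cover, i.e.\ $G\in\FNT$. Hence every graph in the maximum set belongs to \FNT, giving one inclusion.

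For the reverse inclusion I would invoke Theorem~\ref{thm:explo_algo}, which exhibits a \emph{single} algorithm, namely Algorithm~\ref{algo:fnt}, that is an Exploration algorithm and that halts after visiting all vertices on every $G\in\FNT$. This witnesses that \FNT is itself explorable, so \FNT is contained in the maximum explorable set. Combining the two inclusions yields that \FNT equals the maximum set of explorable networks.

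The only point that carries genuine weight here is the reverse inclusion: explorability of a \emph{family} requires one universal algorithm that works uniformly across the whole class, not merely a separate algorithm for each individual graph. This is precisely what Theorem~\ref{thm:explo_algo} supplies, so once both cited results are in hand there is no remaining obstacle, and the corollary follows directly.
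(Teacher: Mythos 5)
Your proposal is correct and matches the paper's (implicit) argument exactly: the paper derives the corollary immediately from Proposition~\ref{prop:NC} (every explorable graph lies in $\FNT$) and Theorem~\ref{thm:explo_algo} (Algorithm~\ref{algo:fnt} explores all of $\FNT$ with a single uniform algorithm), just as you do. Your additional remark recalling why a maximum explorable set exists via the parallel-composition monotonicity argument is a faithful restatement of what the paper established earlier.
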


\section{Complexity of the Exploration Problem}

\makeatletter{}In the previous section, we did not provide any bound on the number of
moves performed by an agent executing our universal exploration
algorithm. In this section, we study the complexity of the problem and
we show that there does not exist any exploration algorithm for all
graphs in $\FNT$ such that one can bound the number of moves performed
by the agent by a computable function in the size of the graph.

The first reason that such a bound cannot exist is rather simple: if
the universal cover $\Un{G}$ of a complex $G$ is finite, then by
Lemma~\ref{lemma:lifting}, when executed on $G$, any exploration
algorithm has to perform at least $|V(\Un{G})|$ steps before it
halts. In other words, one can only hope to bound the number of moves
performed by an exploration algorithm on a graph $G$ by a function of
the size of the universal cover of the complex $G$.

However, in the following theorem, we show that even if we consider
only graphs with simply connected clique complexes (i.e., they are
isomorphic to their universal covers), there is no
Exploration algorithm for this class of graph such that one can bound
its complexity by a computable function. Our proof relies on a result
of Haken~\cite{Haken73} that show that it is undecidable to detect
whether a finite simplicial complex is simply connected or
not.

\begin{theorem}\label{th:unbounded}
Consider any algorithm $\A$ that explores every finite graph $G \in
\SC$. For any computable function $\comp : \N \rightarrow \N$, there
exists a graph $G \in \SC$ such that when executed on $G$, $\A$
executes strictly more than $\comp(|V(G)|)$ steps.
\end{theorem}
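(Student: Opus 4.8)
The plan is to argue by contradiction: assume $\A$ explores every $G\in\SC$ and that some computable $\comp$ bounds its number of moves, i.e.\ on every $G\in\SC$ and from every homebase, $\A$ halts within $\comp(|V(G)|)$ moves. I would use this hypothetical bound to build an algorithm deciding simple connectivity on a class of inputs where this is known to be undecidable, contradicting Haken's theorem \cite{Haken73}. The right class is the \emph{promise} class $\SC\cup\INF$: finite graphs whose clique complex is either simply connected or has an infinite universal cover.

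The decision procedure is as follows. On input a finite graph $G$ (its binoculars labelling being computable from $G$), fix any homebase $v_0$, compute $N=\comp(|V(G)|)$, and simulate $\A$ on $G$ from $v_0$ until either $\A$ halts or $\A$ performs its $(N{+}1)$-th move; output ``$G\in\SC$'' in the first case and ``$G\in\INF$'' in the second. Correctness is immediate from the two regimes. If $G\in\SC$ then $G$ is its own universal cover, so by hypothesis $\A$ halts within $N$ moves and we answer correctly. If $G\in\INF$ then $G$ is not explorable, so by Proposition~\ref{prop:NC} (via the Lifting Lemma~\ref{lemma:lifting}) no exploration algorithm can halt on $G$, hence $\A$ never halts and eventually reaches move $N{+}1$. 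To make the simulation effective I would first normalise $\A$ so that, whenever it does not halt, it keeps moving (performs perpetual exploration); as noted in the footnote to the definition this costs only a parallel perpetual BFS, which changes the move bound by a computable factor and leaves the $\SC$/$\INF$ behaviour unchanged. With this normalisation the simulation always reaches one of the two stopping conditions in finite time, so the procedure is total on $\SC\cup\INF$.

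It then remains to exhibit an \emph{undecidable} promise problem of this exact shape, which is where the main work lies. I would start from the classical unsolvability of the triviality problem for finitely presented groups and the standard construction, underlying Haken's theorem \cite{Haken73}, turning a Turing machine $M$ into a finite presentation whose group is trivial if $M$ halts and infinite otherwise; realising the presentation as a $2$-complex yields a finite complex $K_M$ that is simply connected when $M$ halts and has infinite fundamental group, hence infinite universal cover, otherwise. To land inside our graph model I would pass to a clique (flag) complex: the barycentric subdivision of any finite simplicial complex is the clique complex of its $1$-skeleton, and subdivision preserves the homeomorphism type, hence the fundamental group and the finiteness of the universal cover. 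Thus $G_M:=\operatorname{sd}(K_M)^{(1)}$, the $1$-skeleton of $\operatorname{sd}(K_M)$, satisfies $G_M\in\SC$ iff $M$ halts and $G_M\in\INF$ otherwise, so $\{G_M\}$ is a promise instance of $\SC$ versus $\INF$ on which simple connectivity is undecidable. Feeding these $G_M$ to the procedure above would decide halting, the desired contradiction.

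The step I expect to be delicate is precisely the guarantee that the non-simply-connected instances have an \emph{infinite} universal cover rather than merely a nontrivial finite one: a finite but nontrivial fundamental group would place $G_M$ in $\FNT\setminus\SC$, where $\A$'s behaviour is uncontrolled and the move-counting test is no longer decisive. This is handled by choosing the group-theoretic reduction so that the non-halting case produces an infinite group (the usual triviality-problem constructions already do this), and by observing that subdivision cannot turn an infinite universal cover into a finite one. Everything else---computability of the labelling, of $\comp$, of $\operatorname{sd}$, and of a bounded simulation---is routine.
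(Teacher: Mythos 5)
Your proposal is correct in outline but takes a genuinely different route from the paper's proof. The paper also argues by contradiction via a decision procedure for simple connectivity (Algorithm~\ref{algo-simple-connectivity}), but it reduces from Haken's result exactly as cited, i.e., for \emph{arbitrary} finite clique complexes; this forces it to handle inputs in $\FNT\setminus\SC$, where $\A$ may legitimately halt within the budget after exploring (the execution lifts to the finite universal cover), as well as inputs in $\INF$. The paper pays for this with two devices you avoid entirely: an exhaustive search for a strict finite covering of $G$ of size at most $\comp(|V(G)|)$ (to catch the $\FNT\setminus\SC$ case), and a quotient construction that collapses the complement of the ball $B_{\Un{G}}(\hat v_0,r)$ in an infinite universal cover to produce a \emph{finite, simply connected} graph agreeing with $\Un{G}$ up to radius $r$, on which $\A$ would halt prematurely; notably, the paper's argument uses only the hypothesis on $\A$'s behaviour over $\SC$. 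You instead restrict to the promise class $\SC\cup\INF$, where your two regimes (halts within the budget vs.\ never halts, the latter via Proposition~\ref{prop:NC} and Lemma~\ref{lemma:lifting}, together with the perpetual-BFS normalisation) make the move-counting test trivially decisive --- at the price of additionally invoking the global Exploration-algorithm property of $\A$. The cost of this simplification is that you need a \emph{stronger} undecidability input than the cited result of~\cite{Haken73}: undecidability of simple connectivity restricted to complexes whose fundamental group is promised to be trivial or infinite. You correctly flag this as the delicate point, and the fix is indeed standard (Adian--Rabin/Boone-type constructions for the triviality problem yield groups that are either trivial or contain a nonabelian free subgroup, and barycentric subdivision transports this to clique complexes while preserving infiniteness of the universal cover), but this load-bearing step is only sketched and is not literally contained in the reference you cite; it should be made precise or supported by an explicit reference for the proof to be complete.
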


\begin{proof}
\label{proof:unbounded}
\makeatletter{}  Suppose this is not true and consider an algorithm $\A$ and a
  computable function $\comp : \N \rightarrow \N$ such that for any
  graph $G \in \SC$, $\A$ visits all the vertices of $G$ and stops in
  at most $\comp(|V(G)|)$ steps.  We show that in this case, it is
  possible to algorithmically decide whether the clique complex of any
  given graph $G$ is simply connected or not. However, this problem is
  undecidable~\cite{Haken73} and thus we get a
  contradiction\footnote{Note that the original result of
    Haken~\cite{Haken73} does not assume that the simplicial complexes
    are clique complexes. However, for any simplicial complex $K$, the
    barycentric subdivision $K'$ of $K$ is a clique complex that is simply
    connected if and only if $K$ is simply connected
    (see~\cite{Hatcher}).}.

  Algorithm~\ref{algo-simple-connectivity} is an algorithm that takes
  as an input a graph $G$ and then simulates $\A$ on $G$ for
  $\comp(|V(G)|)$ steps. If $\A$ does not stop within these
  $\comp(|V(G)|)$ steps, then by our assumption on $\A$, we know that
  $G \notin \SC$ and the algorithm returns \textsc{no}. If $\A$ stops
  within these $\comp(|V(G)|)$ steps, then we check whether there
  exists a graph $H$ with at most $\comp(|V(G)|)$ vertices such that
  the complex $H$ is a strict covering of the complex $G$. If such an
  $H$ exists, then $G \notin \SC$ and the algorithms returns
  \textsc{no}. If we do not find such an $H$, the algorithm returns
  \textsc{yes}.

  \begin{algorithm}
  \LinesNumbered
  \caption{An algorithm to check simple connectivity}
  \label{algo-simple-connectivity}
  \KwIn{a graph $G$ }
  \BlankLine
    Simulate $\A$ starting from an arbitrary starting vertex $v_0$
    during $\comp(|V(G)|)$ steps  \;   
    \label{line:simulation}
    \eIf{$\A$ halts within $\comp(|V(G)|)$ steps}
        {\eIf{there exists a graph $H$ such that $|V(G)| < |V(H)| \leq
            \comp(|V(G)|)$ and such that the complex $H$ is a covering
            of the complex $G$}
          { \Return \textsc{no}; // the complex $G$ is not simply connected}
          { \Return \textsc{yes}; // the complex $G$ is  simply connected}
        }
        { \Return \textsc{no}; // the complex $G$ is not simply connected}
  \end{algorithm}

  In order to show our algorithm is correct, it is sufficient to show
  that when the algorithm returns \textsc{yes} on a graph $G$, the
  complex $G$ is simply connected. Suppose it is not the case and let
  $\Un{G}$ be the universal cover of the complex $G$.  Consider a
  covering map $\varphi$ from $\Un{G}$ to $G$ and let $\hat{v}_0 \in
  V(\Un{G})$ be any vertex such that $\varphi(\hat{v}_0) = v_0$.  By
  Lemma~\ref{lemma:lifting} and Proposition~\ref{prop:NC}, when executed on
  $\Un{G}$ starting in $\hat{v}_0$, $\A$ stops after at most
  $\comp(|V(G)|)$ steps.

  If $\Un{G}$ is finite, then $\Un{G} \in \SC$ and by our assumption
  on $\A$, $\A$ must explore all vertices of $G$ before it
  halts. Consequently, $\Un{G}$ is a covering of the complex $G$ with
  at most $\comp(|V(G)|)$ vertices; in this case, the algorithm
  returns \textsc{no} and we are done.

  Assume now that $\Un{G}$ is infinite. Let $r = \comp(|V(G)|)$ and
  let $B = B_{\Un{G}}(\hat{v}_0,r)$. Note that when $\A$ is executed on
  $\Un{G}$ starting in $\hat{v}_0$, $\A$ does not visit any node that
  is not in $B$. Given two vertices, $\hat{u}, \hat{v} \in V(\Un{G})$,
  we say that $\hat{ u} \equiB \hat{v}$ if there exists a path from
  $\hat{u}$ to $\hat{v}$ in $\Un{G} \setminus B$. It is easy to see
  that $\equiB$ is an equivalence relation, and that every vertex of
  $B$ is the only vertex in its equivalence class.  For a vertex
  $\hat{u} \in V(\Un{G})$, we denote its equivalence class by
  $[\hat{u}]$. Let $H$ be the graph defined by $V(H) = \{[\hat{u}]
  \mid \hat{u} \in V(\Un{G})\}$ and $E(H) = \{[\hat{u}][\hat{v}] \mid
  \exists \hat{u}' \in [\hat{u}], \hat{v}' \in [\hat{v}],
  \hat{u}'\hat{v}' \in E(\Un{G})\}$.

  We now show that the complex $H$ is simply connected. Let $\varphi :
  V(\Un{G}) \rightarrow V(H)$ be the map defined by $\varphi(\hat{u})
  = [\hat{u}]$. By the definition of $H$, for any edge
  $\hat{u}\hat{v} \in E(\Un{G})$, either $[\hat{u}]=[\hat{v}]$, or
  $[\hat{u}][\hat{v}] \in E(H)$. Consequently, $\varphi$ is a
  simplicial map.  Consider a cycle $c = ({u}_1, {u}_2, \ldots,
  {u}_p)$ in $H$. By the definition of $H$, there exists a cycle $c' =
  (\hat{u}_{1,1}, \ldots, \hat{u}_{1,\ell_1}, \hat{u}_{2,1}, \ldots, \hat{u}_{2,\ell_2}, \ldots, \hat{u}_{p,1}, \ldots, \hat{u}_{p,\ell_p})$
  in $G$ such that for each $1 \leq i \leq p$ and each $1 \leq
  j \leq \ell_i$, $\varphi(\hat{u}_{i,j}) = {u}_i$. Since $\Un{G}$ is
  simply connected, there exists a disk diagram $(D,f)$ such that
  $f(\partial D) = c'$. Consequently, $(D,\varphi \circ f)$ is a disk
  diagram for the loop $\varphi(c')$ that is homotopic to
  $c$. Consequently, $c$ is contractible and from
  Proposition \ref{prop:SC-iff-DD},thus $H$ is simply connected.

  Since $G$ is finite, the degree of every vertex of $\Un{G}$ is
  bounded by $|V(G)|$ and consequently, the number of equivalence
  classes for the relation $\equiB$ is finite. Consequently, the graph
  $H$ is finite and thus $H \in \SC$. Moreover, since for every
  $\hat{u} \in B$, $[\hat{u}] = \{\hat{u}\}$, the ball
  $B_H([\hat{v}_0],r)$ is isomorphic to $B$. Consequently, when $\A$
  is executed on $H$ starting in $[\hat{v}_0]$, $\A$ stops after at
  most $r$ steps before it has visited all vertices of $H$,
  contradicting our assumption on $\A$.

\end{proof}

\section{Conclusion}

Enhancing a mobile agent with binoculars, we have shown that, even
without any global information it is possible to explore and halt in
the class of graphs whose clique complex have a finite universal
cover. This class is maximal and is the counterpart of tree networks
in the classical case without binoculars.  Note that, contrary to the
classical case, where the detection of unvisited nodes is somehow
trivial (any node that is visited while not backtracking is new, and
the end of discovery of new nodes is immediate at leaves), we had here
to introduced tools from discrete topology in order to be able to
detect when it is no more possible to encounter ``new'' nodes.

The class where we are able to explore is fairly large and has been
proved maximal when using binoculars of radius 1.  But note that for
triangle-free networks, using binoculars does not change anything.
More generally, from the proof techniques in Section~\ref{sect:NC}, it
can also be shown that providing only local information (e.g. with
binoculars of higher range) cannot be enough to explore all graphs
(e.g. graphs with large girth). 

While providing binoculars is a natural enhancement, it appears here
that explorability increases at the cost of a huge increase in
complexity, that cannot be expected to be reduced for fundamental
Turing computability reasons for all explorable graphs.  But
preliminary results show that it is possible to explore with
binoculars with a linear move complexity in a class that is way larger
that the tree networks.  So the fact that the full class of explorable
networks is not explorable efficiently should not hide the fact that
the improvement is real for large classes of graphs.  One of the
interesting open problem is to describe the class of networks for
which explorability is increased while still having reasonable move
complexity, like networks that are explorable in linear time.

Note that our Exploration algorithm can actually compute the universal
cover of the graph, and therefore yields a Map Construction algorithm
if we know that the underlying graph has a simply connected clique
complex. However, note that there is no algorithm that can construct
the map for all graphs of \FNT. Indeed, there exist graphs in \FNT
that are not simply connected (e.g. triangulations of the projective
plane) and by the Lifting Lemma, they are indistinguishable from their
universal cover. Note that without binoculars, the class of trees is
not only the class of graphs that are explorable without information,
but also the class of graphs where we can reconstruct the map without
information.  Here, adding binoculars, not only enables to explore
more networks but also give a model with another computability
structure : some problems (like Exploration and Map Construction) are
no longer equivalent in the binocular model.

\vspace{-0.5cm}
\footnotesize
\newcommand{\etalchar}[1]{$^{#1}$}

\end{document}